\documentclass[letterpaper, 12pt]{article}[2000/05/19]
\usepackage[english]{babel}
\usepackage{amsfonts,amsmath,amssymb,amsthm,latexsym,amscd,mathrsfs}
\usepackage{ifthen,cite}
\usepackage[bookmarksnumbered=true]{hyperref}

\hypersetup{pdfpagetransition={Split}}

\newcommand{\evenhead}{Author \ name}
\newcommand{\oddhead}{Article \ name}
\newcommand{\theArticleName}{Article name}

\newcommand{\FirstPageHeading}[1]{\thispagestyle{empty}%
\noindent\raisebox{0pt}[0pt][0pt]{\makebox[\textwidth]{\protect\footnotesize \sf }}\par}

\newcommand{\ArticleName}[1]{\renewcommand{\theArticleName}{#1}\vspace{-2mm}\par\noindent {\LARGE\bf  #1\par}}
\newcommand{\Author}[1]{\vspace{5mm}\par\noindent {\Large  #1\par} \par\vspace{2mm}\par}
\newcommand{\Address}[1]{\vspace{2mm}\par\noindent {\it #1} \par}
\newcommand{\Email}[1]{\ifthenelse{\equal{#1}{}}{}{\par\noindent {\rm E-mail: }{\it  #1} \par}}
\newcommand{\URLaddress}[1]{\ifthenelse{\equal{#1}{}}{}{\par\noindent {\rm URL: }{\tt  #1} \par}}
\newcommand{\EmailD}[1]{\ifthenelse{\equal{#1}{}}{}{\par\noindent {$\phantom{\dag}$~\rm E-mail: }{\it  #1} \par}}
\newcommand{\URLaddressD}[1]{\ifthenelse{\equal{#1}{}}{}{\par\noindent {$\phantom{\dag}$~\rm URL: }{\tt  #1} \par}}

\newcommand{\Abstract}[1]{\vspace{6mm}\par\noindent\hspace*{10mm}
\parbox{140mm}{\small {\bf Abstract.} #1}\par}
\newcommand{\Keywords}[1]{\vspace{3mm}\par\noindent\hspace*{10mm}
\parbox{140mm}{\small {\bf Key words:} \rm #1}\par}
\newcommand{\Classification}[1]{\vspace{3mm}\par\noindent\hspace*{10mm}
\parbox{140mm}{\small {\it 2000 Mathematics Subject Classification:} \rm #1}\vspace{3mm}\par}
\newcommand{\ShortArticleName}[1]{\renewcommand{\oddhead}{#1}}
\newcommand{\AuthorNameForHeading}[1]{\renewcommand{\evenhead}{#1}}

\setlength{\textwidth}{170.0mm} \setlength{\textheight}{227.0mm} \setlength{\oddsidemargin}{0mm} \setlength{\evensidemargin}{0mm}
\setlength{\topmargin}{-8mm} \setlength{\parindent}{5.0mm}
\long\def\@makecaption#1#2{
  \sbox\@tempboxa{\small \textbf{#1.}\ \ #2}%
  \ifdim \wd\@tempboxa >\hsize
    {\small \textbf{#1.}\ \ #2}\par \else
    \global \@minipagefalse
    \hb@xt@\hsize{\hfil\box\@tempboxa\hfil}%
  \fi \vskip\belowcaptionskip}

\def\numberwithin#1#2{\@ifundefined{c@#1}{\@nocounterr{#1}}{%
  \@ifundefined{c@#2}{\@nocnterr{#2}}{%
  \@addtoreset{#1}{#2}%
  \toks@\@xp\@xp\@xp{\csname the#1\endcsname}%
  \@xp\xdef\csname the#1\endcsname
    {\@xp\@nx\csname the#2\endcsname.\the\toks@}}}}
\def\E^#1{{\buildrel #1 \over\vee}}
\newtheorem{theorem}{Theorem}

\newtheorem{proposition}{Proposition}
{\theoremstyle{definition}

}

\begin{document}

\FirstPageHeading{V.I. Gerasimenko and D.O. Polishchuk}

\ShortArticleName{Dynamics of Correlations}

\AuthorNameForHeading{V.I. Gerasimenko and D.O. Polishchuk}

\ArticleName{Dynamics of Correlations \\of Bose and Fermi Particles}

\Author{V.I. Gerasimenko$^\ast$ and D.O. Polishchuk$^\ast$$^\ast$}

\Address{$^\ast$\hspace*{1mm}Institute of Mathematics of NAS of Ukraine,\\
    \hspace*{3mm}3, Tereshchenkivs'ka Str.,\\
    \hspace*{3mm}01601 Kyiv-4, Ukraine}
\EmailD{gerasym@imath.kiev.ua}

\Address{$^\ast$$^\ast$Taras Shevchenko National University of Kyiv,\\
    \hspace*{4mm}Department of Mechanics and Mathematics,\\
    \hspace*{4mm}2, Academician Glushkov Av.,\\
    \hspace*{4mm}03187 Kyiv, Ukraine}
\EmailD{polishuk.denis@gmail.com}

\bigskip

\Abstract{
We discuss the origin of the microscopic description of correlations in quantum many-particle systems obeying
Fermi-Dirac and Bose-Einstein statistics.
For correlation operators that give the alternative description of the quantum state evolution of Bose and Fermi particles,
we deduce the von Neumann hierarchy of nonlinear equations and
construct the solution of its initial-value problem in the corresponding spaces of sequences of trace class operators.
The links of constructed solution both with the solution of the quantum BBGKY hierarchy
and with the nonlinear BBGKY hierarchy for marginal correlation operators are discussed.
The solutions of the Cauchy problems of these hierarchies are constructed for initial data satisfying a chaos property.}

\bigskip

\Keywords{von Neumann hierarchy; Fermi-Dirac and Bose-Einstein statistics; correlation operator; density matrix;
          marginal operators; quantum many-particle system.}
\vspace{2pc}
\Classification{35Q40; 47D06.}

\makeatletter
\renewcommand{\@evenhead}{
\hspace*{-3pt}\raisebox{-15pt}[\headheight][0pt]{\vbox{\hbox to \textwidth {\thepage \hfil \evenhead}\vskip4pt \hrule}}}
\renewcommand{\@oddhead}{
\hspace*{-3pt}\raisebox{-15pt}[\headheight][0pt]{\vbox{\hbox to \textwidth {\oddhead \hfil \thepage}\vskip4pt\hrule}}}
\renewcommand{\@evenfoot}{}
\renewcommand{\@oddfoot}{}
\makeatother

\newpage
\protect\tableofcontents
\vspace{0.5cm}

\section{Introduction}

The paramount importance of the mathematical description of correlations in numerous problems of the modern statistical mechanics
is well-known.
We refer, for example, to such fundamental problems as quantum measurements and the rigorous derivation of quantum kinetic equations \cite{AGT,AA,BDGM,BGGM2,BCEP3,PP09,ESY,Sp80,Got}, in particular,
kinetic equations that describe Bose gases in the condensate state \cite{FL,ESchY3,ESchY2,M1,Ger}.

There are several approaches to the description of fluctuations in many-particle systems.
Among them we mention the well-known approach of dynamics of correlations \cite{Bol,Pri} and its applications in plasma theory.
Equilibrium states in framework of correlation operators are rigorously described in works \cite{Gen,Rul}.

The aim of the paper is to formulate rigorously the evolution equations that describe non-equilibrium correlations
of Fermi and Bose many-particle systems and to construct a solution of the corresponding Cauchy problem.
Then using the constructed solution to establish its links with the solutions of the BBGKY hierarchies for
marginal density and marginal correlation operators \cite{BogLect,Pe95,CGP97}.

The paper is organized in the following way.

In section 2 the description of evolution of states of Fermi and Bose many-particle systems in
framework of correlation operators is given. The correlation operators are defined by means of cluster expansions
of the density operators (density matrices), which are solutions of the initial-value problem of the von Neumann equations
of a non-fixed, i.e. arbitrary but finite, number of particles. We deduce the von Neumann hierarchy that governs the evolution
of correlation operators of Fermi and Bose particles with the general type of an interaction potential
and construct a solution of its initial-value problem,
in particular in case of initial data satisfying a chaos property.

In section 3 we give the rigorous justification of the derivation of evolution equations for the correlation operators.
We prove that the solution of the initial-value problem of the von Neumann hierarchy generates
a group of nonlinear operators of the class $C_{0}$ in the suitable spaces of sequences of trace class operators.
In these spaces the existence and uniqueness of a strong and a weak solution of the Cauchy problem of the von Neumann hierarchy is proved.

In section 4 using introduced correlation operators we define functionals of average values and a dispersion of observables.
Moreover, we give the description of states of Fermi and Bose many-particle systems in
framework of marginal density and marginal correlation operators, that are adopted for the description of infinite-particle systems.
The solutions of the corresponding BBGKY hierarchies for these operators are deduced on the basis of constructed solution of
the von Neumann hierarchy. We demonstrate that the concept of cumulants of groups of operators of the von Neumann equations
forms the basis for the solution expansions of mentioned hierarchies of evolution equations.
In the final section we give concluding comments and remarks.

\section{The von Neumann hierarchy for correlation operators of Fermi and Bose many-particle systems}

\subsection{Preliminary facts}
Let $\mathcal{H}$ be the one-particle Hilbert space, then the $n$-particle space
$\mathcal{H}_n\equiv\mathcal{H}^{\otimes n}$ is a tensor product of $n$ Hilbert spaces $\mathcal{H}$.
We adopt the usual convention that $\mathcal{H}^{\otimes 0}=\mathbb{C}$.
Let $\mathfrak{S}_{n}$ be the permutation group of the set $(1,\ldots,n)$. We assign to each permutation $\pi \in \mathfrak{S}_{n}$
an isomorphism $p_{\pi}$ of $\mathcal{H}^{\otimes n}$ onto itself. The operator $p_{\pi}$
transforms factorized elements $\psi_1\otimes \psi_2\otimes\ldots\otimes \psi_n \in \mathcal{H}^{\otimes n}$
into $\psi_{\pi(1)}\otimes \psi_{\pi(2)}\otimes\ldots\otimes \psi_{\pi(n)} \in \mathcal{H}^{\otimes n}$.
The symmetrization operator $\mathcal{S}^{+}_{n}$ and the anti-symmetrization operator
$\mathcal{S}^{-}_{n}$ on $\mathcal{H}^{\otimes n}$ are defined by the formula
\begin{eqnarray}\label{Sn}
    &&\mathcal{S}^{\pm}_{n}\doteq\frac {1}{n!}\sum\limits_{\pi\epsilon \mathfrak{S}_{n}}(\pm1)^{|\pi|}p_{\pi},
\end{eqnarray}
where $|\pi|$ denotes the number of transpositions in the permutation $\pi$.
The operators $\mathcal{S}^{\pm}_{n}$ are orthogonal projectors, i.e.
$({\mathcal{S}^{\pm}_{n}})^{2}=\mathcal{S}^{\pm}_{n}$,
ranges of which are correspondingly the symmetric tensor product $\mathcal{H}_{n}^{+}$
and the antisymmetric tensor product $\mathcal{H}_{n}^{-}$ of $n$ Hilbert spaces $\mathcal{H}$.
We denote by $\mathcal{F}^{\pm}_{\mathcal{H}}={\bigoplus\limits}_{n=0}^{\infty}\mathcal{H}_{n}^{\pm}$
the Fock spaces over Hilbert space  $\mathcal{H}$, that are associated with the corresponding Fermi and Bose systems \cite{BR,DauL_5}.

Hereinafter we consider a quantum system of a non-fixed,
i.e. arbitrary but finite, number of identical (spinless) particles with unit mass $m=1$ in the space
$\mathbb{R}^{\nu},$ $\nu\geq 1$, that obey Fermi-Dirac or Bose-Einstein statistics.
The Hamiltonian $H=\bigoplus_{n=0}^{\infty}H_{n}$ of the system  is a self-adjoint operator ($H_{0}=0$) with domain
$\mathcal{D}(H)=\{\psi=\oplus\psi_{n}\in{\mathcal{F}^{\pm}_{\mathcal{H}}}\mid \psi_{n}\in\mathcal{D}
(H_{n})\in\mathcal{H}^{\pm}_{n},\sum_{n}\|H_{n}\psi_{n}\|^{2}<\infty\}\subset{\mathcal{F}^{\pm}_{\mathcal{H}}}$.
On functions $\psi_n$ that belong to the subspaces $\mathcal{S}^{\pm}_{n}L^{2}_{0}(\mathbb{R}^{\nu n})\subset \mathcal{D}(H_n) \subset
\mathcal{S}^{\pm}_{n}L^{2}(\mathbb{R}^{\nu n})$ of infinitely
differentiable functions with compact supports $n$-particle Hamiltonian $H_{n}$ acts according to the formula
\begin{eqnarray}\label{H_Zag}
    &&H_{n}\psi_n = \sum\limits_{i=1}^{n}K(i)\psi_n
       +\sum\limits_{k=1}^{n}\,\sum\limits_{i_{1}<\ldots<i_{k}=1}^{n}\Phi^{(k)}(i_{1},\ldots,i_{k})\psi_{n},
\end{eqnarray}
where $K(i)\psi_n = -\frac{\hbar}{2}\Delta_{q_i}\psi_n$ is the operator of the kinetic energy, $2\pi\hbar$ is a Planck constant
and $\Phi^{(k)}(i_{1},\ldots,i_{k})\psi_{n}=\Phi^{(k)}(q_{i_{1}},\ldots,q_{i_{k}})\psi_{n}$ is a $k$-body interaction potential.
We assume that the functions $\Phi^{(k)}, k\geq1,$ are translation-invariant, symmetric with respect to permutations
of their arguments and satisfy the Kato conditions \cite{Kato}, which guarantee self-adjointness of operator (\ref{H_Zag}).

States of systems of Bose and Fermi particles belong to the corresponding spaces
$\mathfrak{L}^{1}(\mathcal{F}^{\pm}_\mathcal{H})= \oplus_{n=0}^{\infty}
\mathfrak{L}^{1}(\mathcal{H}^{\pm}_{n})$ of sequences
$f=(f_0,f_{1},\ldots,f_{n},\ldots)$ of trace class operators
$f_{n}\equiv f_{n}(1,\ldots,n)\in\mathfrak{L}^{1}(\mathcal{H}^{\pm}_{n})$ and $f_0 \in \mathbb{C}$,
that satisfy the symmetry condition:
$f_{n}(1,\ldots,n)=f_{n}(i_{1},\ldots,i_{n})$ for arbitrary $(i_{1},\ldots,i_{n})\in(1,\ldots,n)$,
equipped with the norm
\begin{eqnarray*}
    &&\|f\|_{\mathfrak{L}^{1} (\mathcal{F}^{\pm}_\mathcal{H})}=
        \sum\limits_{n=0}^{\infty} \|f_{n}\|_{\mathfrak{L}^{1}(\mathcal{H}^{\pm}_{n})}=
        \sum\limits_{n=0}^{\infty}~\mathrm{Tr}_{1,\ldots,n}|f_{n}(1,\ldots,n)|,
\end{eqnarray*}
where $\mathrm{Tr}_{1,\ldots,n}$ are partial traces over $1,\ldots,n$ particles \cite{DauL_5,Pe95}.
We denote by $\mathfrak{L}^{1}_{0}$ the everywhere dense set in
$\mathfrak{L}^{1}(\mathcal{F}^{\pm}_\mathcal{H})$ of finite sequences
of degenerate operators \cite{Kato} with infinitely differentiable kernels with compact supports.
Note that the space $\mathfrak{L}^{1}(\mathcal{F}^{\pm}_\mathcal{H})$ contains more general sequences of operators
than those that determine states of systems.

The Bose-Einstein and Fermi-Dirac statistics endows the states with additional symmetry properties.
We illustrate them on kernels of operators \cite{Pe95}.
Let $f_{n}(q_{1},\ldots,$ $q_{n};q'_{1},\ldots,q'_{n})$ is the kernel of the operator $f_{n}\in\mathfrak{L}^{1}(\mathcal{S}^{\pm}_{n}L^{2}(\mathbb{R}^{\nu n}))$.
In case of the Bose-Einstein statistics the kernel is a symmetric function
with respect to permutations in each group of arguments
\begin{eqnarray*}
  &&f_{n}(q_{1},q_{2},\ldots,q_{n};q'_{1},q'_{2},\ldots,q'_{n})=
       f_{n}(q_{\pi(1)},q_{\pi(2)},\ldots,q_{\pi(n)};q'_{\pi'(1)},q'_{\pi'(2)},\ldots,q'_{\pi'(n)}),
\end{eqnarray*}
and in case of the Fermi-Dirac statistics the corresponding kernel is an antisymmetric function
\begin{eqnarray*}
  &&f_{n}(q_{1},q_{2},\ldots,q_{n};q'_{1},q'_{2},\ldots,q'_{n})=(-1)^{(|\pi|+|\pi'|)}
       f_{n}(q_{\pi(1)},q_{\pi(2)},\ldots,q_{\pi(n)};q'_{\pi'(1)},q'_{\pi'(2)},\ldots,q'_{\pi'(n)}),
\end{eqnarray*}
where $\pi,\pi'\in \mathfrak{S}_{n}$.

The evolution of all possible states,
i.e. sequences $D(t)=(1,D_{1}(t,1),\ldots,D_{n}(t,1,\ldots,n),$ $\ldots)\in\mathfrak{L}^{1}(\mathcal{F}^{\pm}_\mathcal{H})$
of the density operators $D_{n}(t), n\geq1$, is described by the initial-value
problem of a sequence of the von Neumann equations (the quantum Liouville equations) \cite{vN55,BogLect,Pe95}
\begin{eqnarray}\label{vonNeumannEqn}
     &&\frac{d}{d t}D(t)=-\mathcal{N}D(t),\\
  \label{F-N12}
     &&D(t)|_{t=0}=D(0),
\end{eqnarray}
where for $f\in \mathfrak{L}^{1}_{0}(\mathcal{F}^{\pm}_\mathcal{H})\subset\mathcal{D}(\mathcal{N})\subset \mathfrak{L}^{1}(\mathcal{F}^{\pm}_\mathcal{H})$
the generator $\mathcal{N}=\oplus^{\infty}_{n=0}\mathcal{N}_{n}$ of the von Neumann equation is defined by
\begin{eqnarray}\label{komyt}
   &&(\mathcal{N}f)_{n}=-\frac{i}{\hbar}\big[f_{n},H_{n}\big]\doteq
       -\frac{i}{\hbar}(f_{n}H_{n}-H_{n}f_{n}),
\end{eqnarray}
where $H_{n}$ is the Hamiltonian (\ref{H_Zag}).

In the spaces of sequences of trace class operators $\mathfrak{L}^{1}(\mathcal{F}^{\pm}_\mathcal{H})$ for initial-value problem
(\ref{vonNeumannEqn})-(\ref{F-N12}) the following statement is true.
\begin{proposition}
A unique solution of initial-value problem (\ref{vonNeumannEqn})-(\ref{F-N12}) is determined by the formula
\begin{eqnarray}\label{rozv_fon-N}
    && D(t)=\mathcal{G}(-t)D(0),
\end{eqnarray}
where $\mathcal{G}(-t)=\oplus_{n=0}^{\infty}\mathcal{G}_{n}(-t)$,
\begin{eqnarray}\label{groupG}
    &&\mathcal{G}_{n}(-t)f_{n}\doteq e^{-{\frac{i}{\hbar}}tH_{n}}f_{n}e^{{\frac{i}{\hbar}}tH_{n}}.
\end{eqnarray}
For $D(0)\in\mathfrak{L}^{1}_{0}(\mathcal{F}^{\pm}_\mathcal{H})\subset
\mathfrak{L}^{1}(\mathcal{F}^{\pm}_\mathcal{H})$ it is a strong (classical) solution and for arbitrary
$D(0) \in \mathfrak{L}^{1}(\mathcal{F}^{\pm}_\mathcal{H})$ it is a weak (generalized) solution.
\end{proposition}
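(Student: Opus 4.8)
The plan is to exploit the direct-sum (Fock) grading: every object splits as $\oplus_{n}(\cdot)_{n}$, so it suffices to solve the $n$-particle von Neumann equation and then reassemble. First I would note that, by the Kato conditions imposed on the potentials $\Phi^{(k)}$, each $H_{n}$ is self-adjoint on $\mathcal{D}(H_{n})\subset\mathcal{H}^{\pm}_{n}$, so Stone's theorem provides the strongly continuous one-parameter unitary group $U_{n}(t)=e^{-\frac{i}{\hbar}tH_{n}}$, with $U_{n}(0)=I$, $U_{n}(t)U_{n}(s)=U_{n}(t+s)$, and $U_{n}(t)$ commuting with $H_{n}$ on $\mathcal{D}(H_{n})$.

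Next I would analyse the conjugation map $\mathcal{G}_{n}(-t)f_{n}\doteq U_{n}(t)f_{n}U_{n}(-t)$ on $\mathfrak{L}^{1}(\mathcal{H}^{\pm}_{n})$. Since the trace norm is invariant under conjugation by unitaries, $\|\mathcal{G}_{n}(-t)f_{n}\|_{\mathfrak{L}^{1}(\mathcal{H}^{\pm}_{n})}=\|f_{n}\|_{\mathfrak{L}^{1}(\mathcal{H}^{\pm}_{n})}$, so $\mathcal{G}_{n}(-t)$ is an isometric isomorphism with inverse $\mathcal{G}_{n}(t)$, and the group law is inherited from $U_{n}$. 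For the $C_{0}$-property I would first verify trace-norm continuity of $t\mapsto\mathcal{G}_{n}(-t)f_{n}$ on the dense set of finite-rank operators, where it reduces to the strong continuity of $U_{n}$ applied to the finitely many vectors in a rank-one decomposition, and then extend to all of $\mathfrak{L}^{1}(\mathcal{H}^{\pm}_{n})$ via the uniform bound $\|\mathcal{G}_{n}(-t)\|=1$ and a standard $\varepsilon/3$ argument. Setting $\mathcal{G}(-t)=\oplus_{n=0}^{\infty}\mathcal{G}_{n}(-t)$, the estimate $\|\mathcal{G}(-t)f-f\|_{\mathfrak{L}^{1}(\mathcal{F}^{\pm}_{\mathcal{H}})}=\sum_{n}\|\mathcal{G}_{n}(-t)f_{n}-f_{n}\|_{\mathfrak{L}^{1}(\mathcal{H}^{\pm}_{n})}$, with the summable majorant $\sum_{n}2\|f_{n}\|_{\mathfrak{L}^{1}(\mathcal{H}^{\pm}_{n})}$, lets dominated convergence yield strong continuity at $t=0$, hence everywhere; so $\mathcal{G}(-t)$ is a $C_{0}$-group of isometries on $\mathfrak{L}^{1}(\mathcal{F}^{\pm}_{\mathcal{H}})$, and in particular (\ref{rozv_fon-N})--(\ref{groupG}) is well defined.

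For the strong solution I would take $D(0)\in\mathfrak{L}^{1}_{0}(\mathcal{F}^{\pm}_{\mathcal{H}})$, i.e.\ a finite sequence of degenerate operators with smooth compactly supported kernels, so each $D_{n}(0)$ maps into $\mathcal{D}(H_{n})$ and $H_{n}D_{n}(0)$, $D_{n}(0)H_{n}$ are again trace class, giving $D_{n}(0)\in\mathcal{D}(\mathcal{N}_{n})$. Because $U_{n}(t)$ commutes with $H_{n}$, one has $H_{n}\mathcal{G}_{n}(-t)D_{n}(0)=\mathcal{G}_{n}(-t)H_{n}D_{n}(0)$ and the mirror identity on the right, whence $\mathcal{G}_{n}(-t)D_{n}(0)\in\mathcal{D}(\mathcal{N}_{n})$ and $\mathcal{N}_{n}\mathcal{G}_{n}(-t)=\mathcal{G}_{n}(-t)\mathcal{N}_{n}$ on $\mathfrak{L}^{1}_{0}$. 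Differentiating $\mathcal{G}_{n}(-t)D_{n}(0)=U_{n}(t)D_{n}(0)U_{n}(-t)$ in trace norm — only finitely many components are nonzero, so no interchange-of-limits issue arises — gives $\frac{d}{dt}D(t)=-\mathcal{N}D(t)$ with trace-norm continuous derivative, and $D(t)|_{t=0}=D(0)$ is immediate: this is the classical solution. For arbitrary $D(0)\in\mathfrak{L}^{1}(\mathcal{F}^{\pm}_{\mathcal{H}})$ I would approximate it in $\mathfrak{L}^{1}(\mathcal{F}^{\pm}_{\mathcal{H}})$ by a sequence from $\mathfrak{L}^{1}_{0}$, write the integrated (weak) form of the equation for the corresponding classical solutions, and pass to the limit using the isometry of $\mathcal{G}(-t)$, obtaining that $D(t)=\mathcal{G}(-t)D(0)$ is a weak solution. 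Uniqueness then follows from the abstract Cauchy-problem theory once $\mathfrak{L}^{1}_{0}$ is recognised as a core for the generator of $\mathcal{G}(-t)$; alternatively, pairing a putative second solution with bounded test operators evolved backwards by the adjoint group and differentiating shows it must coincide with $\mathcal{G}(-t)D(0)$.

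The part I expect to be the genuine obstacle is the trace-norm — not merely strong-operator — continuity of $\mathcal{G}_{n}(-t)$, together with the book-keeping needed to carry it through the infinite direct sum; the rest is the classical well-posedness of the $n$-particle von Neumann equation plus routine density and duality arguments. A secondary subtlety deserving care is that $\mathcal{G}_{n}(-t)$ does not preserve $\mathfrak{L}^{1}_{0}$ itself (the interacting evolution delocalises the kernels), so the strong-solution verification must be formulated on $\mathcal{D}(\mathcal{N}_{n})$ rather than on the explicit dense set of test operators.
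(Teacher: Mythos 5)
Your proposal is correct and follows essentially the same route the paper relies on: the paper itself gives no detailed proof of Proposition 1, deferring to the Maxwell--Boltzmann analogue in \cite{BR,DauL_5} and merely recording the facts you establish (that $t\mapsto\mathcal{G}(-t)f$ is an isometric strongly continuous group on $\mathfrak{L}^{1}(\mathcal{F}^{\pm}_\mathcal{H})$ with generator (\ref{infOper}) on $\mathfrak{L}^{1}_{0}$, and that the $\mathcal{S}^{\pm}_{n}$ commute with the evolution). Your Stone's-theorem/unitary-conjugation argument, the finite-rank density step for trace-norm continuity, and the strong/weak solution dichotomy on $\mathfrak{L}^{1}_{0}$ versus $\mathfrak{L}^{1}$ are exactly the standard argument those references contain, so nothing is missing.
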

For the case of the Maxwell-Boltzmann statistics an analogue of this proposition is proved in \cite{BR},\cite{DauL_5}.

In the spaces $\mathfrak{L}^{1}(\mathcal{F}^{\pm}_\mathcal{H})$
the mapping $: t\rightarrow\mathcal{G}(-t)f$
is an isometric strongly continuous group that preserves positivity and self-adjointness of operators.

For $f\in\mathfrak{L}_{0}^{1}(\mathcal{F}^{\pm}_\mathcal{H})\subset\mathcal{D}(-\mathcal{N})$
in the sense of the norm convergence in the spaces $\mathfrak{L}^{1}(\mathcal{F}^{\pm}_\mathcal{H})$ there exists the limit by which
the infinitesimal generator of the group of evolution operators (\ref{groupG})
is determined:
\begin{eqnarray}\label{infOper}
    &&\lim\limits_{t\rightarrow 0}\frac{1}{t}(\mathcal{G}(-t)f-f)=-\frac{i}{\hbar}(Hf-fH)\doteq-\mathcal{N}f,
\end{eqnarray}
where $H=\oplus^{\infty}_{n=0}H_{n}$ is the Hamiltonian (\ref{H_Zag}) and
the operator $(-\frac{i}{\hbar}(Hf-fH))$ is defined in the domain $\mathcal{D}(H)\subset\mathcal{F}^{\pm}_\mathcal{H}.$

Symmetrization and anti-symmetrization operators (\ref{Sn}) are integrals of motion of the von Neumann equation (\ref{vonNeumannEqn}),
and as a consequence
\begin{eqnarray*}
    &&\mathcal{G}_{n}(-t)\mathcal{S}^{\pm}_{n}=\mathcal{S}^{\pm}_{n}\mathcal{G}_{n}(-t),
\end{eqnarray*}
and
\begin{eqnarray*}
    &&\mathcal{N}_{n}\mathcal{S}^{\pm}_{n}=\mathcal{S}^{\pm}_{n}\mathcal{N}_{n},
\end{eqnarray*}
i.e. the symmetry of states is preserved within evolutionary process.

\subsection{Cluster expansions of density operators}
We introduce operators
$g(t)=(0,g_{1}(t,1),\ldots,g_{s}(t,1,\ldots,s),\ldots)\in \mathfrak{L}^{1}(\mathcal{F}^{\pm}_\mathcal{H})$
that are defined by the cluster expansions of the density operators
$D(t)=(1,D_{1}(t,1),\ldots,D_{s}(t,1,\ldots,s),\ldots)\in \mathfrak{L}^{1}(\mathcal{F}^{\pm}_\mathcal{H})$
\begin{eqnarray}\label{D_(g)N}
    &&D_{s}(t,Y)= g_{s}(t,Y)+
        \sum\limits_{\mbox{\scriptsize $\begin{array}{c}\mathrm{P}:Y=\bigcup_{i}X_{i},\\|\mathrm{P}|>1 \end{array}$}}
        \mathcal{S}^{\pm}_{s}\prod_{X_i\subset \mathrm{P}}g_{|X_i|}(t,X_i),\quad s\geq1,
\end{eqnarray}
where ${\sum\limits}_{\mathrm{P}:Y=\bigcup_{i} X_{i},\,|\mathrm{P}|>1}$ is the sum over all possible partitions $\mathrm{P} $
of the set $Y\equiv(1,\ldots,s)$ into $|\mathrm{P}|>1$ nonempty mutually disjoint subsets $X_i\subset Y$.

We give instances of recursion relations (\ref{D_(g)N})
\begin{eqnarray*}
  &&D_{1}(t,1)=g_{1}(t,1),\\
  &&D_{2}(t,1,2)=g_{2}(t,1,2)+\mathcal{S}^{\pm}_{2}g_{1}(t,1)g_{1}(t,2).
\end{eqnarray*}

The sequence of operators $g(t)$ defined by recursion relations (\ref{D_(g)N})
has a meaning of a sequence of correlation operators of Fermi and Bose many-particle systems,
i.e. a sequence of operators that characterize the correlations of particle states\footnote{The motivation of introduction 
of these operators is considered in section 4.}.
We remark that cluster expansions (\ref{D_(g)N}) without operators $\mathcal{S}^{\pm}_{s}$ correspond
to correlation operators that describe states of systems obeying the Maxwell-Boltzmann statistics.

In order to construct the solution of recursion relations (\ref{D_(g)N}), i.e. to express the correlation operators
in terms of the density operators, we introduce the following notions.
On sequences of operators  $f,\widetilde{f}\in \mathfrak{L}^{1}(\mathcal{F}^{\pm}_\mathcal{H}) $ we define the $\circledast$-product
\begin{eqnarray}\label{Product}
    &&(f\circledast \widetilde{f})_{|Y|}(Y)=\sum\limits_{Z\subset Y} \mathcal{S}^{\pm}_{|Y|}\,f_{|Z|}(Z)
        \,\widetilde{f}_{|Y\backslash Z|}(Y\backslash Z),
\end{eqnarray}
where the operators $\mathcal{S}^{\pm}_{|Y|}$ are defined by formula (\ref{Sn}) and $\sum_{Z\subset Y}$
is the sum over all subsets $Z$ of the set $Y\equiv(1,\ldots,s)$. We remark that similar product was introduced in \cite{Gen,Rul}
for the studying of equilibrium states.

By means of definition (\ref{Product}) of the $\circledast$-product we introduce the mapping ${\mathbb E}\mathrm{xp}_{\circledast}$
and the inverse mapping ${\mathbb L}\mathrm{n}_{\circledast}$ on sequences $h=(0,h_1(1),\ldots,h_n(1,\ldots,n),\ldots)$
of operators  $h_n\in\mathfrak{L}^{1}(\mathcal{H}^{\pm}_{n})$ by the expansions
\begin{eqnarray}\label{circledExp}
   &&({\mathbb E} \mathrm{xp}_{\circledast}h )_{|Y|}(Y)=\big({\bf 1}+
      \sum\limits_{n=1}^{\infty} \frac{h^{\circledast n}}{n!}\big)_{|Y|}(Y)=\\
   &&=\delta_{|Y|,0}+\sum\limits_{\mathrm{P}:\,Y=\bigcup_{i}X_{i}}{\mathcal{S}^{\pm}_{s}}
      \prod_{X_i\subset \mathrm{P}}h_{|X_i|}(X_i),\nonumber
\end{eqnarray}
where ${\bf 1}=(1,0,\ldots,0,\ldots)$ and $\delta_{|Y|,0}$ is a Kronecker symbol,
and correspondingly,
\begin{eqnarray}\label{circledLn}
   &&({\mathbb L}\mathrm{n}_{\circledast} ({\bf 1}+h))_{|Y|}(Y)=
       \big(\sum\limits_{n=1}^{\infty} (-1)^{n-1}\,\frac{h^{\circledast n}}{n}\big)_{|Y|}(Y)=\\
   &&=\sum\limits_{\mathrm{P}:\,Y=\bigcup_{i}X_{i}}
       (-1)^{|\mathrm{P}|-1}(|\mathrm{P}|-1)!\,
       {\mathcal{S}^{\pm}_{s}}\prod_{X_i\subset \mathrm{P}}h_{|X_i|}(X_i).\nonumber
\end{eqnarray}
Then in terms of sequences of operators recursion relations (\ref{D_(g)N}) are rewritten in the form
\begin{eqnarray*}\label{DtoGcircledStar}
    &&D(t)={\mathbb E}\mathrm{xp}_{\circledast}\,\,g(t).
\end{eqnarray*}
From this equation we obtain
\begin{eqnarray*}
    &&g(t)={\mathbb L}\mathrm{n}_{\circledast}D(t),
\end{eqnarray*}
where $D(t)={\bf 1}+(0,D_1(1),D_2(1,2),\ldots,D_n(1,\ldots,n),\ldots)$.

Thus, according to definition (\ref{Product}) of the $\circledast$-product, 
in the component-wise form solutions of recursion relations (\ref{D_(g)N})
are represented by the expansions
\begin{eqnarray}\label{gfromDFB}
   &&g_{s}(t,Y)=D_{s}(t,Y)+\nonumber\\
   &&+\sum\limits_{\mbox{\scriptsize $\begin{array}{c}\mathrm{P}:Y=\bigcup_{i}X_{i},\\|\mathrm{P}|>1\end{array}$}}
       (-1)^{|\mathrm{P}|-1}(|\mathrm{P}|-1)!\,
       {\mathcal{S}^{\pm}_{s}}\prod_{X_i\subset \mathrm{P}}D_{|X_i|}(t,X_i), \quad s\geq1.
\end{eqnarray}
In terms of kernels of operators the first two equations of expansions (\ref{gfromDFB}) have the following form
\begin{eqnarray*}
  &&g_{1}(t,q_1;q_1')=D_{1}(t,q_1;q_1'),\\
  &&g_{2}(t,q_1,q_2;q_1',q_2')=D_{2}(t,q_1,q_2;q_1',q_2')-\\
  &&-\frac{1}{2!}(D_{1}(t,q_1;q_1')D_{1}(t,q_2;q_2')\pm D_{1}(t,q_2;q_1')D_{1}(t,q_1;q_2')).
\end{eqnarray*}
The structure of expansions (\ref{gfromDFB}) means that the correlation operators have a sense of cumulants (semi-invariants)
of the density operators governed by the Cauchy problem of the von Neumann equation (\ref{vonNeumannEqn})-(\ref{F-N12}).

Thus, correlation operators (\ref{gfromDFB}) give an alternative approach to the description of states
of the Fermi and Bose many-particle systems, namely in the framework of dynamics of correlations.

\subsection{The von Neumann hierarchy for correlation operators}
From expansions (\ref{gfromDFB}) using the von Neumann equation (\ref{vonNeumannEqn})
we derive the hierarchy of evolution equations for the correlation operators
$g(t)=(0,g_{1}(t,1),\ldots,g_{s}(t,1,\ldots,s),\ldots)\in \mathfrak{L}^{1}(\mathcal{F}^{\pm}_\mathcal{H})$
\begin{eqnarray}\label{vonNeumanHierarchyAlternateDescriptionFB}
    &&\frac{d}{dt}g_{s}(t,Y)=-\mathcal{N}_{s}(Y)g_{s}(t,Y)+\\
    &&\hskip-5mm+\sum\limits_{\mbox{\scriptsize $\begin{array}{c}\mathrm{P}:Y=\bigcup_{i}X_{i},\\|\mathrm{P}|>1\end{array}$}}\hskip-2mm
       \sum\limits_{\mbox{\scriptsize$\begin{array}{c}{Z_{1}\subset X_{1}},\\Z_{1}\neq\emptyset\end{array}$}}\hskip -2mm \ldots \hskip -2mm
       \sum\limits_{\mbox{\scriptsize $\begin{array}{c} {Z_{|\mathrm{P}|}\subset X_{|\mathrm{P}|}},\\Z_{|\mathrm{P}|}\neq\emptyset \end{array}$}}
       \hskip -2mm \big(-\mathcal{N}_{\mathrm{int}}^{(\sum\limits_{r=1}^{|\mathrm{P}|}|Z_{{r}}|)}
       (Z_{{1}},\ldots,Z_{{|\mathrm{P}|}})\big)\mathcal{S}^{\pm}_{s}\prod_{X_{i}\subset \mathrm{P}}g_{|X_{i}|}(t,X_{i}),\nonumber
\end{eqnarray}
where
\begin{eqnarray}\label{oper Nint2}
     &&\mathcal{N}^{(n)}_{\mathrm{int}}(1,\ldots,n)\doteq-\frac{i}{\hbar}\big[~\cdot~,\Phi^{(n)}(1,\ldots,n)\big],
\end{eqnarray}
$\Phi^{(n)}$ is a $n$-body interaction potential defined in (\ref{H_Zag}),
${\sum\limits}_{\mbox{\scriptsize $\begin{array}{c}\mathrm{P}:Y=\bigcup_{i} X_{i},\,|\mathrm{P}|>1\end{array}$}}$
is the sum over all possible partitions $\mathrm{P} $ of the set $Y\equiv(1,\ldots,s)$ into $|\mathrm{P}|>1$
nonempty mutually disjoint subsets $X_i\subset Y$ and $\sum_{Z_{j}\subset X_{j},\,Z_{j}\neq\emptyset}$
is a sum over nonempty subsets $Z_{j}\subset X_{j}$. We refer to the hierarchy of equations
(\ref{vonNeumanHierarchyAlternateDescriptionFB}) as the von Neumann hierarchy.

Hierarchy (\ref{vonNeumanHierarchyAlternateDescriptionFB}) together with the initial state defined by
\begin{eqnarray}\label{vonNeumanHierarchyAlternateDescriptionInitialValue}
     && g_{s}(t,Y)\big|_{t=0}=g_{s}(0,Y),\quad s\geq1,
\end{eqnarray}
describes the evolution of all possible states of Fermi and Bose many-particle systems
in framework of correlation operators.

We give an explanatory comment concerning the origin of
the von Neumann hierarchy (\ref{vonNeumanHierarchyAlternateDescriptionFB}).
As a result of the differentiation of the definition of correlation operators (\ref{gfromDFB}) by time variable
in the sense of pointwise convergence in the spaces $\mathfrak{L}^{1}(\mathcal{H}^{\pm}_{n})$
and using cluster expansions (\ref{D_(g)N}) we obtain
\begin{eqnarray*}
   &&\frac{d}{dt}g_{s}(t,Y)=
       \sum\limits_{\mbox{\scriptsize $\begin{array}{c}\mathrm{P}:Y=\bigcup_{i}X_{i}\end{array}$}}(-1)^{|\mathrm{P}|-1}(|\mathrm{P}|-1)!\,
       {\mathcal{S}^{\pm}_{s}}\sum\limits_{X_i\subset \mathrm{P}}(-\mathcal{N}_{|X_i|})(X_i)
       \prod_{X_i\subset \mathrm{P}}D_{|X_i|}(t,X_i)=\\
   &&=\hskip-2mm\sum\limits_{\mbox{\scriptsize $\begin{array}{c}\mathrm{P}:Y=\bigcup_{i}X_{i}\end{array}$}}
       \hskip -2mm(-1)^{|\mathrm{P}|-1}(|\mathrm{P}|-1)!\,
       {\mathcal{S}^{\pm}_{s}}\sum\limits_{X_i\subset \mathrm{P}}(-\mathcal{N}_{|X_i|})(X_i) \prod_{X_i\subset \mathrm{P}}\sum\limits_{\mbox{\scriptsize $\begin{array}{c} \mathrm{P'}:X_i=\bigcup_{j_i}Z_{j_i}\end{array}$}}
       \prod_{Z_{j_i}\subset \mathrm{P'}}g_{|Z_{j_i}|}(t,Z_{j_i}).
\end{eqnarray*}
Changing the summation indexes $\mathrm{P}$ and $\mathrm{P'}$ in this expression
and collecting similar terms gives us the chain of equalities (\ref{vonNeumanHierarchyAlternateDescriptionFB})
which we interpret as the hierarchy of evolution equations for the sequence of correlation operators\footnote{The rigorous
justification is given in section 3.}.

We cite an instance the typical equations of hierarchy (\ref{vonNeumanHierarchyAlternateDescriptionFB})
\begin{eqnarray*}\label{vonNeumanHierarchyAlternateDescriptionFBExamples}
    &&\frac{d}{dt}g_{1}(t,1)=-\mathcal{N}_{1}(1)g_{1}(t,1),\nonumber\\
    &&\frac{d}{dt}g_{2}(t,1,2)=-\mathcal{N}_{2}(1,2)g_{2}(t,1,2)-\mathcal{N}_{\mathrm{\mathrm{int}}}^{(2)}(1,2)
       \mathcal{S}^{\pm}_{2}g_{1}(t,1)g_{1}(t,2),\nonumber\\
    &&\frac{d}{dt}g_{3}(t,1,2,3)=-\mathcal{N}_{3}(1,2,3)g_{3}(t,1,2,3)-\nonumber\\
    &&\quad-\big(\mathcal{N}_{\mathrm{int}}^{(2)}(1,2)+\mathcal{N}_{\mathrm{int}}^{(2)}(1,3)
       +\mathcal{N}_{\mathrm{int}}^{(3)}(1,2,3)\big)
       \mathcal{S}^{\pm}_{3}g_{1}(t,1)g_{2}(t,2,3)- \nonumber\\
    &&\quad-\big(\mathcal{N}_{\mathrm{int}}^{(2)}(1,2)+\mathcal{N}_{\mathrm{int}}^{(2)}(2,3)+
       \mathcal{N}_{\mathrm{int}}^{(3)}(1,2,3)\big)
       \mathcal{S}^{\pm}_{3}g_{1}(t,2)g_{2}(t,1,3)-\nonumber\\
    &&\quad-\big(\mathcal{N}_{\mathrm{int}}^{(2)}(1,3)+\mathcal{N}_{\mathrm{int}}^{(2)}(2,3)+
       \mathcal{N}_{\mathrm{int}}^{(3)}(1,2,3)\big)
       \mathcal{S}^{\pm}_{3}g_{1}(t,3)g_{2}(t,1,2)-\nonumber\\
    &&\quad-\mathcal{N}_{\mathrm{int}}^{(3)}(1,2,3)\mathcal{S}^{\pm}_{3}(1,2,3)g_{1}(t,1)g_{1}(t,2)g_{1}(t,3).
\end{eqnarray*}
In case of a two-body interaction potential the von Neumann hierarchy (\ref{vonNeumanHierarchyAlternateDescriptionFB}) 
reduces to the following form
\begin{eqnarray}\label{vonNeumannTwoBody}
       &&\frac{d}{dt}g_{s}(t,Y)=-\mathcal{N}_s(Y)g_s(t,Y)+\\
       &&+\sum\limits_{\mathrm{P}:\,Y=X_{1}\bigcup X_2}\,
         \sum\limits_{i_{1}\in X_{1}}\sum\limits_{i_{2}\in X_{2}}
         \big(-\mathcal{N}_{\mathrm{int}}^{(2)}(i_{1},i_{2})\big)
         \mathcal{S}^{\pm}_s g_{|X_{1}|}(t,X_{1})g_{|X_{2}|}(t,X_{2}),\quad s\geq1,\nonumber
\end{eqnarray}
where ${\sum\limits}_{\mathrm{P}:\,Y=X_{1}\bigcup X_2}$ is the sum over all possible partitions $\mathrm{P} $
of the set $Y\equiv(1,\ldots,s)$ into two nonempty mutually disjoint subsets $X_1\subset \ Y$ and $X_2\subset \ Y$.

We note that for classical many-particle systems hierarchy
(\ref{vonNeumannTwoBody}) was introduced in \cite{Gre56} as the zero-order approximation of the nonlinear BBGKY hierarchy for
marginal correlation functions.

In terms of kernels of correlation operators the first two equations of the von Neumann hierarchies
(\ref{vonNeumanHierarchyAlternateDescriptionFB}) and (\ref{vonNeumannTwoBody}) have the form
\begin{eqnarray*}
  &&i\hbar\,\frac{\partial}{\partial t}g_1(t,q_1;q'_1)= -\frac{\hbar^{2}}{2}(\Delta_{q_{1}}-\Delta_{q^{'}_{1}})g_1(t,q_1;q'_1),\\
  &&i\hbar\,\frac{\partial}{\partial t}g_2(t,q_1,q_2;q'_1,q'_2)= \\
  &&=\big(-\frac{\hbar^{2}}{2}\sum_{i=1}^2(\Delta_{q_{i}}-
      \Delta_{q^{'}_{i}})+\big(\Phi^{(2)}(q_1-q_2)-\Phi^{(2)}(q'_1-q'_2)\big)\big)g_2(t,q_1,q_2;q'_1,q'_2)+\\
  &&+\big(\Phi^{(2)}(q_1-q_2)-\Phi^{(2)}(q'_1-q'_2)\big)\frac{1}{2!}\big(g_1(t,q_1;q'_1)g_1(t,q_2;q'_2)\pm g_1(t,q_2;q'_1)g_1(t,q_1;q'_2)\big),
\end{eqnarray*}
where $\Phi^{(2)}$ is a two-body interaction potential (\ref{H_Zag}).

We remark that a particular solution of the steady von Neumann hierarchy is a sequence of the Ursell operators, 
for example, its first two elements have the form
\begin{eqnarray*}
  &&g_1(t,1)= e^{-\beta K(1)},\\
  &&g_2(t,1,2)=\mathcal{S}^{\pm}_2 e^{-\beta \sum\limits_{i=1}^{2}K(i)}(e^{-\beta\Phi^{(2)}(1,2)}-I),
\end{eqnarray*}
where $\beta$ is a parameter inversely proportional to temperature, the operators $K(i)$ and $\Phi^{(2)}$ are defined by (\ref{H_Zag}).
We use the name Ursell operators to indicate their intimate structure
with the expansions of classical equilibrium states developed by H.D. Ursell \cite{Urs}.

\subsection{A formal solution of the von Neumann hierarchy}
To construct a solution of the Cauchy problem (\ref{vonNeumanHierarchyAlternateDescriptionFB}),(\ref{vonNeumanHierarchyAlternateDescriptionInitialValue})
we first consider its structure for one physically motivated example of initial data,
namely initial data satisfying a chaos property (statistically independent particles) \cite{CGP97}.
A chaos property means that there are no correlations in a system at the initial instant.
In this case the sequence of initial correlation operators is the one-component sequence
\begin{eqnarray}\label{posl_g(0)}
     && g(0)=(0,g_{1}(0,1),0,\ldots).
\end{eqnarray}
In fact, in terms of the sequence of the density operators this condition means that
$D(0)=(1,D_{1}(0,1), \mathcal{S}^{\pm}_{2}D_{1}(0,1)D_{1}(0,2),\ldots,\mathcal{S}^{\pm}_{n}\prod^n_{i=1} D_{1}(0,i),\ldots)$.

On basis of representation (\ref{gfromDFB}) of the correlation operators
in terms of the density operators and formula (\ref{rozv_fon-N}), in view of the fact
that the operators $\mathcal{S}^{\pm}_{2}$ commute with groups of operators (\ref{groupG}), we have
\begin{eqnarray*}
   &&g_{s}(t,Y)=\sum\limits_{\mbox{\scriptsize $\begin{array}{c}\mathrm{P}:Y=\bigcup_{i}X_{i}\end{array}$}}\hskip -1mm
       (-1)^{|\mathrm{P}|-1}(|\mathrm{P}|-1)!\,\prod_{X_i\subset \mathrm{P}}\mathcal{G}_{|X_{i}|}(-t,X_{i})
       \,{\mathcal{S}^{\pm}_{s}}\prod_{i=1}^{s}D_{1}(0,i), \quad s\geq1.
\end{eqnarray*}
Taking into account the equality: $D_{1}(0,i)=g_{1}(0,i), \, 1\leq i\leq s$,  we derive
the formula of a solution of the Cauchy problem (\ref{vonNeumanHierarchyAlternateDescriptionFB}),(\ref{vonNeumanHierarchyAlternateDescriptionInitialValue})
for initial data (\ref{posl_g(0)})
\begin{eqnarray}\label{rozvChaosN}
   && g_{s}(t,Y)=\mathfrak{A}_{s}(t,Y)\,
      \mathcal{S}^{\pm}_s \,\prod_{i=1}^{s}g_{1}(0,i), \quad s\geq1,
\end{eqnarray}
where $\mathfrak{A}_{s}(t,Y)$ is the $s$-order cumulant\footnote{Some properties of cumulants are considered in section 3.1.}
of the groups of operators (\ref{groupG}) of the von Neumann equations \cite{GerS,GerShJ}
\begin{eqnarray} \label{cumulants}
    &&\mathfrak{A}_{s}(t,Y)\doteq\sum\limits_{\mathrm{P}:\,Y =
       \bigcup_i X_i}(-1)^{|\mathrm{P}|-1}({|\mathrm{P}|-1})!
       \prod\limits_{X_i\subset\mathrm{P}}\mathcal{G}_{|X_{i}|}(-t,X_{i}),
\end{eqnarray}
and we use accepted above notations. It should be emphasized that
the structure of expansions (\ref{cumulants}) of the evolution operators $\mathfrak{A}_{s}(t,Y)$ means that they are determed
by the cluster expansions of the groups of operators (\ref{groupG}) of the von Neumann equation (\ref{vonNeumannEqn})-(\ref{F-N12}).

Thus, the cumulant nature of correlation operators induces the cumulant structure of a one-parametric mapping
generated by solution (\ref{rozvChaosN}) and as we shall see in section 4, the same statement is true for
the cumulant structure of the solution expansion of the BBGKY hierarchy.

From (\ref{rozvChaosN}) it is clear that in case of absence of correlations in a system at initial instant
the correlations generated by the dynamics of a system are completely governed by the cumulants (\ref{cumulants})
of evolution operators (\ref{groupG}).

Hereafter we use the following notations: $Y_\mathrm{P}\equiv(\{X_1\},\ldots,\{X_{|\mathrm{P}|}\})$ is a set,
elements of which are $|\mathrm{P}|$ mutually disjoint subsets $X_i\subset Y\equiv(1,\ldots,s)$ of the partition $\mathrm{P}:Y=\cup_{i=1}^{|\mathrm{P}|}X_i$, i.e. $|Y_\mathrm{P}|=|\mathrm{P}|$. In view of these notations we state that $\{Y\}$
is the set consisting of one element $Y=(1,\ldots,s)$ of the partition $\mathrm{P}$ $(|\mathrm{P}|=1)$ and $|\{Y\}|=1$.

We define the declasterization mapping $\theta: Y_P\rightarrow Y$, by the following formula
\begin{eqnarray}\label{Theta}
   &&\theta(Y_P)=Y.
\end{eqnarray}
For example, let $X\equiv(1,\ldots,s+n)$, then for the set $(\{Y\},X\setminus Y)$ it holds
\begin{eqnarray*}
   &&\theta(\{Y\},X\setminus Y) = X .
\end{eqnarray*}

Now we construct a solution of hierarchy
(\ref{vonNeumanHierarchyAlternateDescriptionFB}) for arbitrary initial data.
Using solution (\ref{rozv_fon-N}) of the von Neumann equation (\ref{vonNeumannEqn})
we rewrite formula (\ref{gfromDFB}) as follows
\begin{eqnarray*}
   &&g_{s}(t,Y)=\sum\limits_{\mbox{\scriptsize $\begin{array}{c}\mathrm{P}:Y=\bigcup_{i}X_{i}\end{array}$}}
       (-1)^{|\mathrm{P}|-1}(|\mathrm{P}|-1)!\,{\mathcal{S}^{\pm}_{s}}
       \prod_{X_i\subset \mathrm{P}}\mathcal{G}_{|X_i|}(-t,X_i)D_{|X_i|}(0,X_i).
\end{eqnarray*}
In view of expansion (\ref{D_(g)N}) at initial instant we have
\begin{eqnarray*}
   &&g_{s}(t,Y)=\hskip -2mm\sum\limits_{\mbox{\scriptsize $\begin{array}{c}\mathrm{P}:Y=\bigcup_{i}X_{i}\end{array}$}}\hskip -1mm
       (-1)^{|\mathrm{P}|-1}(|\mathrm{P}|-1)!{\mathcal{S}^{\pm}_{s}}
       \prod_{X_i\subset \mathrm{P}}\mathcal{G}_{|X_i|}(-t,X_i)\sum\limits_{\mathrm{P'}:X_i=
       \bigcup_{j_i} Z_{j_i}}\,\prod_{Z_{j_i}\subset \mathrm{P'}}g_{|Z_{j_i}|}(0,Z_{j_i}).
\end{eqnarray*}
Changing the summation indexes $\mathrm{P}$ and $\mathrm{P'}$ in this expression gives us
\begin{eqnarray*}
    &&g_{s}(t,Y)=\sum\limits_{\mbox{\scriptsize $\begin{array}{c}\mathrm{P}:Y=\bigcup_{i}X_{i}\end{array}$}}
       \sum\limits_{\mathrm{P'}:(\{X_1\},\ldots,\{X_{|\mathrm{P}|}\})=\bigcup_k Z_k}
       (-1)^{|\mathrm{P'}|-1}(|\mathrm{P'}|-1)! \times\\
    &&\times\prod_{Z_k\subset \mathrm{P'}}\mathcal{G}_{|\theta(Z_k)|}(-t,\theta(Z_k)){\mathcal{S}^{\pm}_{s}}
       \prod_{X_i\subset \mathrm{P}}g_{|X_i|}(0,X_i).
\end{eqnarray*}
Finally we represent the solution of Cauchy problem  (\ref{vonNeumanHierarchyAlternateDescriptionFB}),(\ref{vonNeumanHierarchyAlternateDescriptionInitialValue})
for arbitrary initial data in the following form
\begin{eqnarray}\label{rozvNF-N_F}
    &&g_{s}(t,Y)=\sum\limits_{\mathrm{P}:\,Y=\bigcup_iX_i}
        \mathfrak{A}_{|\mathrm{P}|}(t,\{X_1\},\ldots,\{X_{|\mathrm{P}|}\})
        \mathcal{S}^{\pm}_{s}\prod_{X_i\subset \mathrm{P}}g_{|X_i|}(0,X_i),\quad s\geq1.
\end{eqnarray}
Here $\mathfrak{A}_{|\mathrm{P}|}(t)$ is the $|\mathrm{P}|$-order cumulant of groups of operators (\ref{groupG})
defined by the formula
\begin{eqnarray} \label{cumulantP}
    &&\mathfrak{A}_{|\mathrm{P}|}(t,\{X_1\},\ldots,\{X_{|\mathrm{P}|}\})\doteq\\
    &&\doteq\sum\limits_{\mathrm{P}^{'}:\,(\{X_1\},\ldots,\{X_{|\mathrm{P}|}\})=
    \bigcup_k Z_k}(-1)^{|\mathrm{P}^{'}|-1}({|\mathrm{P}^{'}|-1})!
       \prod\limits_{Z_k\subset\mathrm{P}^{'}}\mathcal{G}_{|\theta(Z_{k})|}(-t,\theta(Z_{k})),\nonumber
\end{eqnarray}
where $\sum_{\mathrm{P}^{'}:\,(\{X_1\},\ldots,\{X_{|\mathrm{P}|}\})=\bigcup_k Z_k}$
is the sum over all possible partitions $\mathrm{P}^{'}$
of the set $(\{X_1\},\ldots,$ $\{X_{|\mathrm{P}|}\})$ into $|\mathrm{P}^{'}|$ nonempty mutually disjoint subsets
$Z_k\subset (\{X_1\},\ldots,$ $\{X_{|\mathrm{P}|}\})$.

We remark that for $|\mathrm{P}|\geq 2$, the  $|\mathrm{P}|$-order cumulants $\mathfrak{A}_{|\mathrm{P}|}(t)$
of groups of operators (\ref{groupG}) of the von Neumann equations have similar structure
in contrast to the first-order cumulant. Indeed, for $|\mathrm{P}|=1$ we have
\begin{eqnarray*}
  &&\mathfrak{A}_{1}(t,\{1,\ldots,s\})=\mathcal{G}_{n}(-t,1,\ldots,s),
\end{eqnarray*}
and, for instance, in case of $|\mathrm{P}|=s$, according to (\ref{cumulantP}),
the $s$-order cumulant has structure (\ref{cumulants}).

The simplest examples of correlation operators (\ref{rozvNF-N_F}) are given by the expressions
\begin{eqnarray*}
    &&g_{1}(t,1)=\mathfrak{A}_{1}(t,1)g_{1}(0,1),\\
    &&g_{2}(t,1,2)=\mathfrak{A}_{1}(t,\{1,2\})g_{2}(0,1,2)+\mathfrak{A}_{2}(t,1,2)\mathcal{S}^{\pm}_2g_{1}(0,1)g_{1}(0,2),\\
    &&g_{3}(t,1,2,3)= \mathfrak{A}_{1}(t,\{1,2,3\})g_{3}(0,1,2,3)+ \mathfrak{A}_{2}(t,\{2,3\},1)\mathcal{S}^{\pm}_3g_{1}(0,1)g_{2}(0,2,3)+\\
    &&\qquad+\mathfrak{A}_{2}(t,\{1,3\},2)\mathcal{S}^{\pm}_3g_{1}(0,2)g_{2}(0,1,3)+
        \mathfrak{A}_{2}(t,\{1,2\},3)\mathcal{S}^{\pm}_3g_{1}(0,3)g_{2}(0,1,2)+\\
    &&\qquad+\mathfrak{A}_{3}(t,1,2,3)\mathcal{S}^{\pm}_3g_{1}(0,1)g_{1}(0,2)g_{1}(0,3).
\end{eqnarray*}

The validity of solution expansion (\ref{rozvNF-N_F}) can be verified by
straightforward differentiation by time variable (see section 3) and also in the following way.
Taking into account the fact that the von Neumann hierarchy (\ref{vonNeumanHierarchyAlternateDescriptionFB})
is the evolution recurrence equations set, we can construct a solution of initial-value problem  (\ref{vonNeumanHierarchyAlternateDescriptionFB}),(\ref{vonNeumanHierarchyAlternateDescriptionInitialValue})
by integrating each equation of the hierarchy as the inhomogeneous von Neumann equation.
For example, as a result of integrating the first two equations of hierarchy (\ref{vonNeumanHierarchyAlternateDescriptionFB})
and using formula (\ref{groupG}), we obtain the following equalities
\begin{eqnarray*}
    &&g_{1}(t,1)=\mathcal{G}_{1}(-t,1)g_{1}(0,1),\\
    &&g_{2}(t,1,2)=\mathcal{G}_{2}(-t,1,2)g_{2}(0,1,2)+\\
    &&+\int\limits_{0}^{t}dt_{1}\mathcal{G}_{2}(-t+t_{1},1,2)\big(-\mathcal{N}^{(2)}_{\mathrm{int}}(1,2)\big)
       \mathcal{S}^{\pm}_{2}\mathcal{G}_{1}(-t_{1},1)\mathcal{G}_{1}(-t_{1},2)g_{1}(0,1)g_{1}(0,2).
\end{eqnarray*}
In consequence of the fact that the operators $\mathcal{S}^{\pm}_{2}$ commute with groups
$\mathcal{G}_{1}(-t_{1},1), \mathcal{G}_{1}(-t_{1},2)$,
the second equality takes the form
\begin{eqnarray*}
     &&g_{2}(t,1,2)=\mathcal{G}_{2}(-t,1,2)g_{2}(0,1,2)+\\
     &&+\int\limits_{0}^{t}dt_{1}\mathcal{G}_{2}(-t+t_{1},1,2)\big(-\mathcal{N}^{(2)}_{\mathrm{int}}(1,2)\big)
        \mathcal{G}_{1}(-t_{1},1)\mathcal{G}_{1}(-t_{1},2)\mathcal{S}^{\pm}_{2}g_{1}(0,1)g_{1}(0,2).
\end{eqnarray*}
Then for the second term on the right-hand side of this equation an analog of the Duhamel equation holds
\begin{eqnarray}\label{iter2kum}
    &&\int\limits_{0}^{t}dt_{1}\mathcal{G}_{2}(-t+t_{1},1,2)
         \big(-\mathcal{N}^{(2)}_{\mathrm{int}}(1,2)\big)\mathcal{G}_{1}(-t_{1},1)\mathcal{G}_{1}(-t_{1},2)=\\
    &&=-\mathcal{G}_{2}(-t,1,2)\int\limits_{0}^{t}dt_{1}\frac{d}{dt_{1}}\big(\mathcal{G}_{2}(t_{1},1,2)
         \mathcal{G}_{1}(-t_{1},1)\mathcal{G}_{1}(-t_{1},2)\big)=\nonumber\\
    &&=\mathcal{G}_{2}(-t,1,2)-\mathcal{G}_{1}(-t,1)\mathcal{G}_{1}(-t,2)=\mathfrak{A}_{2}(t,1,2),\nonumber
\end{eqnarray}
where $\mathfrak{A}_{2}(t)$ is the second-order cumulant of groups of operators (\ref{groupG}) defined by formula (\ref{cumulants}).
For $s>2$ the solution of the Cauchy problem (\ref{vonNeumanHierarchyAlternateDescriptionFB}),(\ref{vonNeumanHierarchyAlternateDescriptionInitialValue}),
which is constructed by iterations, is represented by expansions (\ref{rozvNF-N_F})
as a consequence of transformations similar to an analog of the Duhamel equation (\ref{iter2kum}).

We note, that in case of initial data (\ref{posl_g(0)}) solution (\ref{rozvChaosN}) of the Cauchy problem (\ref{vonNeumanHierarchyAlternateDescriptionFB}),(\ref{vonNeumanHierarchyAlternateDescriptionInitialValue})
of the von Neumann hierarchy may be rewritten in another representation. For $n=1$, we have
\begin{eqnarray*}
     && g_{1}(t,1)=\mathfrak{A}_{1}(t,1)g_{1}(0,1).
 \end{eqnarray*}
Then, within the context of the definition of the first-order cumulant,
$\mathfrak{A}_{1}(t)$, and the dual group of operators
$\mathfrak{A}_{1}(-t)$, we express the correlation operators
$g_{s}(t), \,s\geq 2$, in terms of the one-particle correlation
operator $g_{1}(t)$ using formula
(\ref{rozvChaosN}). Hence for $s\geq2$ formula (\ref{rozvChaosN}) is represented in the form
\begin{eqnarray*}
       &&g_{s}(t,Y)=\widehat{\mathfrak{A}}_{s}(t,Y)\,
          \mathcal{S}^{\pm}_s \,\prod_{i=1}^{s}\,g_{1}(t,i),\quad s\geq 2,
\end{eqnarray*}
where $\widehat{\mathfrak{A}}_{s}(t,Y)$ is the $s$-order
cumulant (\ref{cumulants}) of the scattering operators
\begin{eqnarray*}
     &&\widehat{\mathcal{G}}_{t}(Y)\doteq
        \mathcal{G}_{s}(-t,Y)\prod_{k=1}^{s}\mathcal{G}_{1}(t,k),\quad s\geq1.
\end{eqnarray*}
The generator of the scattering operator $\widehat{\mathcal{G}}_{t}(Y)$ is determined by the operator
\begin{eqnarray*}
  &&\frac{d}{dt}\widehat{\mathcal{G}}_{t}(Y)|_{t=0}= -\sum\limits_{k=2}^{s}\,\,\sum\limits_{i_{1}<\ldots<i_{k}=1}^{s}
     \,\mathcal{N}_{\mathrm{int}}^{(k)}(i_{1},\ldots,i_{k}),
\end{eqnarray*}
where the operator $\mathcal{N}_{\mathrm{int}}^{(k)}$ acts in $\mathfrak{L}^{1}_{0}(\mathcal{H}^{\pm}_{s})\subset
\mathfrak{L}^{1}(\mathcal{H}^{\pm}_{s})$ according to formula (\ref{oper Nint2}).

\subsection{Correlation operators of particle clusters}
Further we consider a more general notion, namely, the correlation operators of particle clusters
that describe the correlations between clusters of particles.
For a system consisting of particles and of one particle cluster $\{Y\}\equiv\{1,\ldots,s\}$ we introduce the sequence of operators
$g^{(s)}(t)=(g_{1+0}(t,\{Y\}),\ldots,g_{1+n}(t,\{Y\},X\setminus Y),\ldots) \in \mathfrak{L}^{1}(\oplus_{n=0}^{\infty}\mathcal{H}_{s+n}^{\pm})$
that are defined by the following generalized cluster expansions of the density operators
$D(t)=(1,D_{1}(t,1),\ldots,$ $D_{s+n}(t,1,\ldots,s+n),\ldots)\in \mathfrak{L}^{1}(\mathcal{F}^{\pm}_\mathcal{H})$
\begin{eqnarray}\label{gce}
  &&D_{s+n}(t,X)= g_{1+n}(t,\{Y\},X\setminus Y)+\\
  &&+\sum\limits_{\mbox{\scriptsize $\begin{array}{c}\mathrm{P}:(\{Y\},X\setminus Y)=\bigcup_{i}X_{i},\\|\mathrm{P}|>1\end{array}$}}{\mathcal{S}^{\pm}_{s+n}}
        \prod_{X_i\subset \mathrm{P}}g_{|X_i|}(t,X_i), \qquad n\geq0,\nonumber
\end{eqnarray}
where $X\equiv(1,\ldots,s+n)$, $X \setminus Y=(s+1,\ldots,s+n)$ and ${\sum\limits}_{
\mathrm{P}:\,(\{Y\},\,X\setminus Y)=\,\bigcup_{i} X_{i},\,|\mathrm{P}|>1}$
is the sum over all possible partitions $\mathrm{P} $ of the set $(\{Y\},X\setminus Y)$ into $|\mathrm{P}|>1$
nonempty mutually disjoint subsets $X_i\subset (\{Y\},X\setminus Y)$.

Relations between correlation operators of particle clusters $g^{(s)}(t)=(g_{1+0}(t,\{Y\}),\ldots,g_{1+n}(t,$ $\{Y\},X\setminus Y),\ldots) \in \mathfrak{L}^{1}(\oplus_{n=0}^{\infty}\mathcal{H}_{s+n}^{\pm})$ and correlation operators of particles (\ref{gfromDFB})
is given by the equality
\begin{eqnarray}\label{gCluster}
  &&g_{1+n}(t,\{Y\},X\setminus Y)=\\
  &&=\sum\limits_{\mathrm{P}:(\{Y\},\,X\setminus Y)=\bigcup_i X_i}
      (-1)^{|\mathrm{P}|-1}(|\mathrm{P}| -1)!\, \mathcal{S}^{\pm}_{s+n}\prod_{X_i\subset \mathrm{P}}\,\,\sum\limits_{\mathrm{P'}:\,\theta(X_{i})=\bigcup_{j_i} Z_{j_i}}\hskip 1mm
      \prod_{Z_{j_i}\subset \mathrm{P'}}g_{|Z_{j_i}|}(t,Z_{j_i}).\nonumber
\end{eqnarray}
In particular case $n=0$, i.e. the correlation operator of a cluster of $|Y|$ particles, these relations take the form
\begin{eqnarray*}\label{gCluster0}
  &&g_{1+0}(t,\{Y\})=\mathcal{S}^{\pm}_{s}\sum\limits_{\mathrm{P}:\,Y=\bigcup_{i} X_{i}}\hskip 1mm
      \prod_{X_{i}\subset \mathrm{P}}g_{|X_{i}|}(t,X_{i}).\nonumber
\end{eqnarray*}

From relations (\ref{gCluster}) using hierarchy (\ref{vonNeumanHierarchyAlternateDescriptionFB})
we derive the von Neumann hierarchy for the correlation operators of particle clusters 
\begin{eqnarray}\label{gClusterhierarchy}
    &&\frac{d}{dt}g_{1+n}(t,\{Y\},X\setminus Y)
       =-\mathcal{N}_{s+n}(X)g_{1+n}(t,\{Y\},X\setminus Y)+\\
    &&+\sum\limits_{\mbox{\scriptsize $\begin{array}{c}{\mathrm{P}}:(\{Y\},\,X\setminus Y)={\bigcup\limits}_i X_i,\\|\mathrm{P}|>1\end{array}$}}
       \sum\limits_{\mbox{\scriptsize $\begin{array}{c} Z_1 \subset \theta(X_1),\\Z_{1}\neq\emptyset\end{array}$}}
       \ldots \sum\limits_{\mbox{\scriptsize $\begin{array}{c}Z_{|\mathrm{P}|}\subset\theta(X_{|\mathrm{P}|}),\\Z_{|\mathrm{P}|}\neq\emptyset\end{array}$}}
       \big(-\mathcal{N}_{\mathrm{int}}^{(\sum\limits_{i=1}^{|\mathrm{P}|}|Z_{i}|)}(Z_1,\ldots, \nonumber\\
    &&Z_{|\mathrm{P}|})\big)
       \mathcal{S}^{\pm}_{s+n}\prod\limits_{X_i \subset \mathrm{P}}g_{|X_i|}(t,X_{i}).\nonumber
\end{eqnarray}

The formal solution of the von Neumann hierarchy for correlation operators 
of clusters of particles is given by the following expansion
\begin{eqnarray}\label{rozvNF-N_F_clusters}
    &&g_{1+n}(t,\{Y\},X\setminus Y)=\\
    &&=\sum\limits_{\mathrm{P}:\,(\{Y\},\,X\setminus Y)=\bigcup_iX_i}
        \mathfrak{A}_{|\mathrm{P}|}\big(t,\{\theta(X_1)\},\ldots,\{\theta(X_{|\mathrm{P}|})\}\big)
        \mathcal{S}^{\pm}_{s+n}\prod_{X_i\subset \mathrm{P}}g_{|X_i|}(0,X_i),\quad n\geq 0,\nonumber
\end{eqnarray}
where $\mathfrak{A}_{|\mathrm{P}|}(t)$ is the $|\mathrm{P}|$-order cumulant defined by (\ref{cumulantP}).

For the sequence of correlation operators of clusters of particles $g^{(s)}(t)=(g_{1+0}(t,\{Y\}),g_{1+1}(t,$
$\{Y\},s+1),\ldots,g_{1+n}(\{Y\},s+1,\ldots,s+n),\ldots)$ the initial data satisfying a chaos property, i.e.
in case of absence of correlations between particles and between particle cluster and particles, is the sequence
\begin{eqnarray}\label{gChaos}
  &&g^{(s)}(0)=(g_{1+0}(0,\{Y\}),0,\ldots).
\end{eqnarray}
The corresponding solution of the initial-value problem of the von Neumann hierarchy is given by the expansion
\begin{eqnarray}\label{gtUg0}
   &&g_{1+n}(t,\{Y\},X\setminus Y)=\mathfrak{A}_{1+n}(t,\{Y\},\,X\setminus Y)\,
       \mathcal{S}_{s+n}^{\pm}\prod\limits_{i\in (\{Y\},\,X\setminus Y)}g_{1}(0,i),
\end{eqnarray}
where $\mathfrak{A}_{1+n}(t)$ is the $(n+1)$-order cumulant defined by formula (\ref{cumulantP}) and has the form
\begin{eqnarray}\label{cumcp}
    &&\mathfrak{A}_{1+n}(t,\{Y\},\,X\setminus Y)=\sum\limits_{\mathrm{P}:\,(\{Y\},\,X\setminus Y)=
    \bigcup_i X_i}(-1)^{|\mathrm{P}|-1}({|\mathrm{P}|-1})!
       \prod\limits_{X_i\subset\mathrm{P}}\mathcal{G}_{|\theta(X_i)|}(-t,\theta(X_i)).
\end{eqnarray}
Thus, correlations created
during the evolution of a system are described by formula (\ref{gtUg0}) and defined by the corresponding
order cumulant (\ref{cumcp}).

\section{The Cauchy problem of the von Neumann hierarchy in the spaces of sequences of trace-class operators}

\subsection{Some properties of a solution}
The intrinsic properties of constructed solution (\ref{rozvNF-N_F}) are generated by the properties
of cumulants (\ref{cumulantP}) of groups of operators of the von Neumann equations.

At initial time $t=0$ solution (\ref{rozvNF-N_F}) satisfies
initial condition (\ref{vonNeumanHierarchyAlternateDescriptionInitialValue}). Indeed,
according to definition (\ref{groupG}), i.e. that $\mathcal{G}_{n}(0)=I$ is a unit operator,
and in view of the identity
\begin{eqnarray}\label{Stirl}
    && \sum\limits_{\mathrm{P}:\,(1,\ldots,n)=\bigcup_iX_i}
      (-1)^{| \mathrm{P}|-1}(|\mathrm{P}|-1)!=
      \sum\limits_{k=1}^{n}(-1)^{k-1}\mathrm{s}(n,k)(k-1)!=\delta_{n,1},
\end{eqnarray}
where $\mathrm{s}(n,k)$ are the Stirling numbers of the second kind and $\delta_{n,1}$ is a Kronecker symbol,
for $|\mathrm{P}|\geq 2$ we have
\begin{eqnarray*}
  &&\mathfrak{A}_{|\mathrm{P}|}(0,\{X_1\},\ldots,\{X_{|\mathrm{P}|}\})=
      \sum\limits_{\mathrm{P}^{'}:\,(\{X_1\},\ldots,\{X_{|\mathrm{P}|}\})=\bigcup_k Z_k}
      (-1)^{| \mathrm{P}^{'}|-1}(|\mathrm{P}^{'}|-1)!I=0.
\end{eqnarray*}

Let us consider some properties of cumulants (\ref{cumulants}) of groups of operators (\ref{groupG})
introduced in previous section.

Since for a system of non-interacting particles it holds
\begin{eqnarray*}
    &&\mathcal{G}_{n}(-t,1,\ldots,n)=\prod_{i=1}^{n}\mathcal{G}_{1}(-t,i),
\end{eqnarray*}
then taking into consideration identity (\ref{Stirl}), for $n\geq 2$ we obtain
\begin{eqnarray*}
    &&\mathfrak{A}_{n}(t,1,\ldots,n)=
       \sum\limits_{\mathrm{P}:\,(1,\ldots,n)=\bigcup_i X_i}
       (-1)^{| \mathrm{P}|-1}(|\mathrm{P}|-1)!\prod\limits_{X_{i}\subset \mathrm{P}}
       \prod_{l_i=1}^{|X_{i}|}\mathcal{G}_{1}(-t,l_i)=\\
    &&=\sum\limits_{k=1}^{n}(-1)^{k-1}\mathrm{s}(n,k)(k-1)!\prod_{i=1}^{n}\mathcal{G}_{1}(-t,i)=0.
\end{eqnarray*}

The infinitesimal generator of the first-order cumulant is defined by the following
limit in the sense of the norm convergence in the spaces $\mathfrak{L}^{1}(\mathcal{H}^{\pm}_{n})$
\begin{eqnarray*}
    &&\lim\limits_{t\rightarrow 0}\frac{1}{t}(\mathfrak{A}_{1}(t,\{1,\ldots,n\})-I) f_{n}
       =(-\mathcal{N}_{n}f_{n})(1,\ldots,n),
\end{eqnarray*}
where the operator $(-\mathcal{N}_{n})$ is defined by formula (\ref{komyt}) for $f_n\in\mathfrak{L}^{1}_0(\mathcal{H}^{\pm}_{n})\subset\mathfrak{L}^{1}(\mathcal{H}^{\pm}_{n})$.

In general case, i.e. $n\geq 2$, the generator of the $n$-order cumulant is given by the operator
\begin{eqnarray}\label{diffCumulant}
    &&\lim\limits_{t\rightarrow 0}\frac{1}{t}\mathfrak{A}_{n}(t,1,\ldots,n) f_{n}=
       (-\mathcal{N}^{(n)}_{\mathrm{int}}f_{n})(1,\ldots,n),
\end{eqnarray}
defined by formula (\ref{oper Nint2})
on $\mathfrak{L}^{1}_0(\mathcal{H}^{\pm}_{n})\subset\mathfrak{L}^{1}(\mathcal{H}^{\pm}_{n})$ and the limit exists
in the sense of the norm convergence in the spaces $\mathfrak{L}^{1}(\mathcal{H}^{\pm}_{n})$.
To prove the validity of statement (\ref{diffCumulant}) we use definition (\ref{cumulants}) and
equality (\ref{infOper}) and represent the limit in left hand side of (\ref{diffCumulant}) in the explicit form
\begin{eqnarray}\label{Nint}
   &&\lim\limits_{t\rightarrow 0}\frac{1}{t}\mathfrak{A}_{n}(t) f_{n}
       =\lim\limits_{t\rightarrow 0}\frac{1}{t}\sum\limits_{\mathrm{P}:\,(1,\ldots,n)=\bigcup_k Z_k}(-1)^{|\mathrm{P}|-1}
       ({|\mathrm{P}|-1})! \prod\limits_{Z_k\subset\mathrm{P}}\mathcal{G}_{|Z_{k}|}(-t,Z_{k})f_{n}=\\
   &&=\sum\limits_{\mathrm{P}:\,(1,\ldots,n)=\bigcup_k Z_k}(-1)^{|\mathrm{P}|-1}(|\mathrm{P}|-1)! \sum\limits_{Z_k\subset
       \mathrm{P}}(-\mathcal{N}_{|Z_{k}|}(Z_k))f_{n}=\nonumber\\
   &&=\sum\limits_{\mathrm{P}:\,(1,\ldots,n)=\bigcup_k Z_k}(-1)^{|\mathrm{P}|-1}(|\mathrm{P}|-1)!\,
       \sum\limits_{i=1}^{n}(-\mathcal{N}_{1}(i))f_{n}+\nonumber\\
   &&+ \sum\limits_{\mathrm{P}:\,(1,\ldots,n)=\bigcup_k Z_k}(-1)^{|\mathrm{P}|-1}(|\mathrm{P}|-1)! \sum\limits_{Z_k\subset
       \mathrm{P}}\,\sum\limits^{|Z_{k}|}_{r=2}\, \sum\limits^{|Z_{k}|}_{j_{1}<\ldots<j_{r}=r}
       (-\mathcal{N}^{(r)}_{\mathrm{int}}(j_{1},\ldots,j_{n})) f_{n},\nonumber
\end{eqnarray}
where $\mathcal{N}_{1}(i)f_{n}\equiv \frac{i}{\hbar}(K(i)f_{n}-f_{n}K(i))$.
Taking into account identity (\ref{Stirl}) and the fact that $n\geq 2$,
for the first term of formula (\ref{Nint}) it holds
\begin{eqnarray*}
    &&\sum\limits_{\mathrm{P}:\,(1,\ldots,n)=\bigcup_k Z_k}\hskip-2mm
       (-1)^{|\mathrm{P}|-1}(|\mathrm{P}|-1)!\sum\limits_{i=1}^{n}(-\mathcal{N}_{1}(i))=0.
\end{eqnarray*}
Applying identity (\ref{Stirl}) to the result of rearrangements in the second term of (\ref{Nint}), we finally obtain
\begin{eqnarray*}
    &&\sum\limits_{\mathrm{P}:\,(1,\ldots,n)=\bigcup_k Z_k}(-1)^{|\mathrm{P}|-1}(|\mathrm{P}|-1)!\sum\limits_{Z_k\subset
       \mathrm{P}}\sum\limits^{|Z_{k}|}_{r=2} \sum\limits^{|Z_{k}|}_{j_{1}<\ldots<j_{r}=r} (-\mathcal{N}^{(r)}_{\mathrm{int}}(j_{1},\ldots,j_{n}))=
\end{eqnarray*}
\begin{eqnarray*}
    &&=\sum\limits_{Z\subset Y}(-\mathcal{N}_{\mathrm{int}}^{(|Z|)}(Z))\sum\limits_{k=1}^{n-|Z|+1}(-1)^{|\mathrm{P}|-1}(|\mathrm{P}|-1)!
       \,\mathrm{s}(n-|Z|+1,k)=\nonumber\\
    &&=\sum\limits_{Z\subset Y}(-\mathcal{N}_{\mathrm{int}}^{(|Z|)}(Z))\,\delta_{n,|Z|}=-\mathcal{N}^{(n)}_{\mathrm{int}}(1,\ldots,n),
\end{eqnarray*}
We emphasize that the operator $(-\mathcal{N}^{(n)}_{\mathrm{int}})$ is defined by a $n$-body interaction potential (\ref{H_Zag}).

Similarly to the proof of statement (\ref{diffCumulant}) for the $|\mathrm{P}|$-order, $|\mathrm{P}|\geq 2$, cumulant (\ref{cumulantP})
of particle clusters for $f_n\in\mathfrak{L}^{1}_0(\mathcal{H}^{\pm}_{n})\subset\mathfrak{L}^{1}(\mathcal{H}^{\pm}_{n})$ we obtain
in the sense of the norm convergence in the spaces $\mathfrak{L}^{1}(\mathcal{H}^{\pm}_{n})$
\begin{eqnarray}\label{oper Nint1}
    &&\lim\limits_{t\rightarrow 0}\frac{1}{t}\mathfrak{A}_{|\mathrm{P}|}(t,\{X_1\},\ldots,\{X_{|\mathrm{P}|}\}) f_{n}=\\
    &&=\sum\limits_{\mathrm{P}^{'}:\,(\{X_1\},\ldots,\{X_{|\mathrm{P}|}\})=\bigcup_k Z_k}
        (-1)^{|\mathrm{P}^{'}|-1}(|\mathrm{P}^{'}| -1)!\sum\limits_{Z_k\subset \mathrm{P}^{'}}
        (-\mathcal{N}_{|\theta(Z_{k})|}(\theta(Z_k)))f_{n}=\nonumber\\
    &&=\sum\limits_{\mbox{\scriptsize
        $\begin{array}{c}{Z_{1}\subset X_{1}},\\Z_{1}\neq\emptyset\end{array}$}}\ldots
        \sum\limits_{\mbox{\scriptsize
        $\begin{array}{c}{Z_{|\mathrm{P}|}\subset X_{|\mathrm{P}|}},\\Z_{|\mathrm{P}|}\neq\emptyset\end{array}$}}
        \big(-\mathcal{N}_{\mathrm{int}}^{(\sum\limits_{r=1}^{|\mathrm{P}|}|Z_{{r}}|)}
        (Z_{{1}},\ldots,Z_{{|\mathrm{P}|}})\big)f_{n}.\nonumber
\end{eqnarray}

\subsection{The group of nonlinear operators generated by the von Neumann hierarchy}
In the spaces $\mathfrak{L}^{1}(\mathcal{F}^{\pm}_\mathcal{H})$ constructed solution (\ref{rozvNF-N_F})
generates a group of nonlinear operators.
The properties of this group are described by the following theorem.
\begin{theorem}
For  $f\in\mathfrak{L}^{1}(\mathcal{F}^{\pm}_\mathcal{H})$ the mapping
\begin{eqnarray}\label{groupKum}
    &&t\rightarrow \big(\mathfrak{A}_{t}(f)\big)_{n}\doteq \sum\limits_{\mathrm{P}:\,(1,\ldots,n)=\bigcup_iX_i}
        \mathfrak{A}_{|\mathrm{P}|}(t,\{X_1\},\ldots,\{X_{|\mathrm{P}|}\})
        \mathcal{S}^{\pm}_{n}\prod_{X_i\subset \mathrm{P}}f_{|X_i|}(X_i)
\end{eqnarray}
is a group of nonlinear operators of class $C_{0}$. In the subspaces $\mathfrak{L}^{1}_{0}(\mathcal{H}^{\pm}_{n})\subset
\mathfrak{L}^{1}(\mathcal{H}^{\pm}_{n})$, $n\geq1$, the infinitesimal generator
$\mathfrak{N}(\cdot)$ of group (\ref{groupKum}) is defined by the operator
\begin{eqnarray}\label{Nnl}
    &&\big(\mathfrak{N}(f)\big)_{n}(1,\ldots,n)\doteq
        -\mathcal{N}_{n}f_{n}+\\
    &&+\sum\limits_{\mbox{\scriptsize $\begin{array}{c}\mathrm{P}:(1,\ldots,n)=\bigcup_{i}X_{i},\\|\mathrm{P}|\neq1\end{array}$}}
        \sum\limits_{\mbox{\scriptsize$\begin{array}{c}{Z_{1}\subset X_{1}},\\Z_{1}\neq\emptyset\end{array}$}}
        \ldots \sum\limits_{\mbox{\scriptsize$\begin{array}{c}{Z_{|\mathrm{P}|}\subset X_{|\mathrm{P}|}},\\Z_{|\mathrm{P}|}\neq\emptyset\end{array}$}}
        \big(-\mathcal{N}_{\mathrm{int}}^{(\sum\limits_{r=1}^{|\mathrm{P}|}|Z_{{r}}|)}(Z_{{1}},\ldots,Z_{{|\mathrm{P}|}})\big)
        \mathcal{S}^{\pm}_{n}\prod_{X_{i}\subset \mathrm{P}}f_{|X_{i}|}(X_{i}),\nonumber
\end{eqnarray}
\end{theorem}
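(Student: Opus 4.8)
The plan is to reduce everything to the properties of the cumulants $\mathfrak{A}_{|\mathrm{P}|}(t)$ of the groups $\mathcal{G}_{n}(-t)$ recorded above, and to make all term-by-term manipulations in the partition sum (\ref{groupKum}) legitimate by one uniform norm estimate. First I would show that (\ref{groupKum}) really defines an operator on $\mathfrak{L}^{1}(\mathcal{F}^{\pm}_\mathcal{H})$: since each $\mathcal{G}_{k}(-t)$ is isometric on $\mathfrak{L}^{1}(\mathcal{H}^{\pm}_{k})$, definition (\ref{cumulantP}) gives $\|\mathfrak{A}_{|\mathrm{P}|}(t)\|_{\mathfrak{L}^{1}\rightarrow\mathfrak{L}^{1}}\leq\sum_{\mathrm{P}'}(|\mathrm{P}'|-1)!$, a bound independent of $t$; combining this with $\|\mathcal{S}^{\pm}_{n}\|\leq1$ and summing the resulting series over the partitions $\mathrm{P}$ of $(1,\dots,n)$ and then over $n$ yields $\mathfrak{A}_{t}(f)\in\mathfrak{L}^{1}(\mathcal{F}^{\pm}_\mathcal{H})$ with a bound on $\|\mathfrak{A}_{t}(f)\|$ that is uniform for $t$ in bounded intervals. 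This estimate is the workhorse for the dominated-convergence arguments below.

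For the group property I would use the factorization implicit in the derivation of (\ref{rozvNF-N_F}): $\mathfrak{A}_{t}={\mathbb L}\mathrm{n}_{\circledast}\circ\mathcal{G}(-t)\circ{\mathbb E}\mathrm{xp}_{\circledast}$ with $\mathcal{G}(-t)$ acting componentwise, since $D(t)=\mathcal{G}(-t)\,{\mathbb E}\mathrm{xp}_{\circledast}g(0)$ and $g(t)={\mathbb L}\mathrm{n}_{\circledast}D(t)$. As ${\mathbb E}\mathrm{xp}_{\circledast}$ and ${\mathbb L}\mathrm{n}_{\circledast}$ are mutually inverse and $\{\mathcal{G}(-t)\}_{t\in\mathbb{R}}$ is a group commuting with the $\mathcal{S}^{\pm}_{n}$, one gets $\mathfrak{A}_{t_{1}}\circ\mathfrak{A}_{t_{2}}={\mathbb L}\mathrm{n}_{\circledast}\,\mathcal{G}(-t_{1})\mathcal{G}(-t_{2})\,{\mathbb E}\mathrm{xp}_{\circledast}=\mathfrak{A}_{t_{1}+t_{2}}$, while $\mathfrak{A}_{0}=\mathrm{id}$ was already checked via the Stirling identity (\ref{Stirl}). (One can also verify this directly by reindexing the double partition sum in $\mathfrak{A}_{t_{1}}(\mathfrak{A}_{t_{2}}(f))$, or a posteriori from uniqueness for the Cauchy problem once differentiability is in hand.)

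For the $C_{0}$ property it suffices, by the group property, to prove strong continuity at $t=0$, i.e.\ $\|\mathfrak{A}_{t}(f)-f\|_{\mathfrak{L}^{1}(\mathcal{F}^{\pm}_\mathcal{H})}\rightarrow0$ for each $f\in\mathfrak{L}^{1}(\mathcal{F}^{\pm}_\mathcal{H})$. In the $n$-th component write $(\mathfrak{A}_{t}(f))_{n}-f_{n}$ as the sum (\ref{groupKum}) minus its $|\mathrm{P}|=1$ value; as $t\rightarrow0$ the $|\mathrm{P}|=1$ term tends to $f_{n}$ and each term with $|\mathrm{P}|\geq2$ tends to $0$, using strong continuity of the groups $\mathcal{G}_{k}(-t)$ on $\mathfrak{L}^{1}(\mathcal{H}^{\pm}_{k})$ together with $\mathfrak{A}_{|\mathrm{P}|}(0)=0$ for $|\mathrm{P}|\geq2$ (noted after (\ref{Stirl})). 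The uniform-in-$t$ estimate of the first paragraph supplies an $\mathfrak{L}^{1}(\mathcal{F}^{\pm}_\mathcal{H})$-summable majorant, so dominated convergence finishes it.

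Finally, the generator. For $f\in\mathfrak{L}^{1}_{0}$ I would differentiate (\ref{groupKum}) at $t=0$ termwise: the $|\mathrm{P}|=1$ contribution is $t^{-1}(\mathcal{G}_{n}(-t)f_{n}-f_{n})\rightarrow-\mathcal{N}_{n}f_{n}$ by (\ref{infOper}), and for each partition with $|\mathrm{P}|\geq2$ the limit of $t^{-1}\mathfrak{A}_{|\mathrm{P}|}(t,\{X_{1}\},\dots)$ is exactly the nested sum of $(-\mathcal{N}^{(\sum|Z_{r}|)}_{\mathrm{int}}(Z_{1},\dots,Z_{|\mathrm{P}|}))$ in (\ref{oper Nint1}), whose proof already uses (\ref{diffCumulant}) and (\ref{Stirl}). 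Collecting these over all $\mathrm{P}$ reproduces the right-hand side of (\ref{Nnl}), identifying $\mathfrak{N}(f)$ as the generator; one checks along the way that $\mathfrak{N}$ maps $\mathfrak{L}^{1}_{0}$ into $\mathfrak{L}^{1}(\mathcal{F}^{\pm}_\mathcal{H})$. The sum over $\mathrm{P}$ is finite for each fixed $n$, so the only delicate point is passing $\lim_{t\rightarrow0}t^{-1}(\,\cdot\,)$ through the sum over $n$; this is handled by bounding $t^{-1}\mathfrak{A}_{|\mathrm{P}|}(t)$ uniformly for small $t$ via the Duhamel representation (whose $s=2$ instance is (\ref{iter2kum})) in terms of the operators $\mathcal{N}^{(k)}_{\mathrm{int}}$. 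I expect this last uniform control of $t^{-1}\mathfrak{A}_{|\mathrm{P}|}(t)$ and of the associated series near $t=0$ to be the main technical obstacle; the rest is bookkeeping with partitions and the already-established cumulant facts.
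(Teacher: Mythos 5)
Your proposal assembles the same four ingredients as the paper -- a trace-norm estimate from the isometry of $\mathcal{G}_{n}(-t)$ and the factor $(|\mathrm{P}'|-1)!$, the group property, strong continuity at $t=0$ split into the $|\mathrm{P}|=1$ block plus vanishing higher cumulants, and identification of the generator from (\ref{infOper}) and (\ref{oper Nint1}) -- so the skeleton is right. The one place you genuinely diverge is the group property: you propose the conjugation $\mathfrak{A}_{t}={\mathbb L}\mathrm{n}_{\circledast}\circ\mathcal{G}(-t)\circ{\mathbb E}\mathrm{xp}_{\circledast}$, using that ${\mathbb E}\mathrm{xp}_{\circledast}$ and ${\mathbb L}\mathrm{n}_{\circledast}$ are mutually inverse (componentwise finite sums) and that $\mathcal{G}(-t)$ is a group commuting with $\mathcal{S}^{\pm}_{n}$. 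The paper instead expands $\mathfrak{A}_{t_{1}}(\mathfrak{A}_{t_{2}}(f))$ into the double partition sum, collects terms, and uses the group property of $\mathcal{G}_{n}(-t)$ directly, which is the reindexing you mention as a fallback. Your route is cleaner and makes transparent why cumulants compose; the paper's is self-contained at the level of formula (\ref{cumulantP}) and does not require arguing that the $\circledast$-calculus identities hold on all of $\mathfrak{L}^{1}(\mathcal{F}^{\pm}_\mathcal{H})$. Either is acceptable.

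There is, however, a flaw in how you set up the analytic framework. Your ``workhorse'' estimate claims that summing over $n$ gives $\mathfrak{A}_{t}(f)\in\mathfrak{L}^{1}(\mathcal{F}^{\pm}_\mathcal{H})$ with a uniform bound, and you then invoke an $\mathfrak{L}^{1}(\mathcal{F}^{\pm}_\mathcal{H})$-summable majorant for dominated convergence. The bound actually available (the paper's (\ref{ocinka})) is $n!\,e^{3n}c^{n}$ with $c\geq1$, which is not summable over $n$: the number of partitions times the factorials $(|\mathrm{P}'|-1)!$ grows superexponentially, and membership of $f$ in $\mathfrak{L}^{1}(\mathcal{F}^{\pm}_\mathcal{H})$ does not compensate for this. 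So the full-space claims (uniform bound on $\|\mathfrak{A}_{t}(f)\|_{\mathfrak{L}^{1}(\mathcal{F}^{\pm}_\mathcal{H})}$, strong continuity and generation in that norm) are not justified by your argument. The paper avoids the issue entirely by working componentwise: boundedness, strong continuity and the generator are all established in $\mathfrak{L}^{1}(\mathcal{H}^{\pm}_{n})$ for each fixed $n$, where the sum over partitions in (\ref{groupKum}) is finite, so no interchange of a $t\to0$ limit with an infinite series over $n$ ever occurs -- which also means that what you flag as the ``main technical obstacle'' (uniform control of $t^{-1}\mathfrak{A}_{|\mathrm{P}|}(t)$ to pass the limit through the sum over $n$) simply does not arise in the formulation the theorem actually asserts, namely the generator on the subspaces $\mathfrak{L}^{1}_{0}(\mathcal{H}^{\pm}_{n})$. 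If you restate your estimates and limits componentwise, the rest of your argument goes through and matches the paper.
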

\begin{proof}
Mapping (\ref{groupKum}) is defined for $f_{n}\in\mathfrak{L}^{1}(\mathcal{H}^{\pm}_{n})$, $n\geq1$,
and the following inequality holds
\begin{eqnarray}\label{ocinka}
     &&\big\|\big(\mathfrak{A}_{t}(f)\big)_{n}\big\|_{\mathfrak{L}^{1}(\mathcal{H}^{\pm}_{n})}\leq n!e^{3n}c^{n},
\end{eqnarray}
where $c\equiv \max(\widetilde{c},1)$ and $\widetilde{c}\equiv \max_{\mathrm{P}:\,(1,\ldots,n)=\bigcup_iX_i}\|f_{|X_{i}|}\|_{_{\mathfrak{L}^{1}(\mathcal{H}^{\pm}_{|X_{i}|})}}.$
Indeed, since for $f_{n}\in \mathfrak{L}^{1}(\mathcal{H}^{\pm}_{n})$ it holds (Proposition 1)
\begin{eqnarray*}
     &&\mathrm{Tr}_{1,\ldots,n}|\mathcal{G}_{n}(-t)f_{n}|=\big\|f_{n}\big\|_{\mathfrak{L}^{1}(\mathcal{H}^{\pm}_{n})},\qquad n\geq1,
\end{eqnarray*}
hence taking into account the inequality
\begin{eqnarray*}
     &&\big\|\mathcal{S}^{\pm}_{n}\prod_{X_{i}\subset \mathrm{P}}f_{|X_{i}|}\big\|_{_{\mathfrak{L}^{1}(\mathcal{H}^{\pm}_{n})}}
        \leq\prod_{X_{i}\subset \mathrm{P}}\big\|f_{|X_{i}|}\big\|_{_{\mathfrak{L}^{1}(\mathcal{H}^{\pm}_{|X_{i}|})}},
\end{eqnarray*}
we obtain
\begin{eqnarray*}
       &&\big\|\big(\mathfrak{A}_{t}(f)\big)_{n}\big\|_{\mathfrak{L}^{1}(\mathcal{H}^{\pm}_{n})}
            \leq\sum\limits_{\mathrm{P}:\,(1,\ldots,n)=\bigcup_i X_i}\,\,
            \sum\limits_{\mathrm{P}^{'}:\,(\{X_1\},\ldots,\{X_{|\mathrm{P}|}\})=\bigcup_k Z_k}
            (|\mathrm{P}^{'}|-1)!\,\prod_{X_i\subset \mathrm{P}}\big\|f_{|X_i|}\big\|_{\mathfrak{L}^{1}
            (\mathcal{H}^{\pm}_{|X_{i}|})}\leq\\
       &&\leq \sum\limits_{\mathrm{P}:\,(1,\ldots,n)=\bigcup_i X_i} c^{|\mathrm{P}|}
            \sum\limits_{k=1}^{|\mathrm{P}|}\mathrm{s}(|\mathrm{P}|,k)(k-1)!
            \leq \sum\limits_{\mathrm{P}:\,(1,\ldots,n)=\bigcup_i X_i} c^{|\mathrm{P}|}
            \sum\limits_{k=1}^{|\mathrm{P}|}k^{|\mathrm{P}|-1}=\\
       &&=\sum\limits_{m=1}^{n}\mathrm{s}(n,m)c^{m}
            \sum\limits_{k=1}^{m}k^{m-1}\leq n!e^{3n}c^{n},
\end{eqnarray*}
where $\mathrm{s}(|\mathrm{P}|,k)$ are the Stirling numbers of the second kind. Thus, $\big(\mathfrak{A}_{t}(f)\big)_{n}\in \mathfrak{L}^{1}(\mathcal{H}^{\pm}_{n})$ for arbitrary $t\in\mathbb{R}^{1}$ and $n\geq 1$.

We now prove the group property of a one-parametric family of the nonlinear operators $\mathfrak{A}_{t}(\cdot)$
that are defined by (\ref{groupKum}), i.e.
\begin{eqnarray*}
      &&\mathfrak{A}_{t_{1}}\big(\mathfrak{A}_{t_{2}}(f)\big)=
         \mathfrak{A}_{t_{2}}\big(\mathfrak{A}_{t_{1}}(f)\big)=\mathfrak{A}_{t_{1}+t_{2}}(f).
\end{eqnarray*}
For $f_{n}\in \mathfrak{L}^{1}(\mathcal{H}^{\pm}_{n})$, $n\geq1$, and for arbitrary
$t_{1},\,t_{2}\in \mathbb{R}^{1}$, according to definition (\ref{cumulantP}), it holds
\begin{eqnarray*}
      &&\big(\mathfrak{A}_{t_{1}}\big(\mathfrak{A}_{t_{2}}(f)\big)\big)_{n}(1,\ldots,n)=\sum\limits_{\mathrm{P}:\,(1,\ldots,n)=\bigcup_iX_i}
            \mathfrak{A}_{|\mathrm{P}|}(t_{1},\{X_1\},\ldots,\{X_{|\mathrm{P}|}\})\times\\
      &&\times\prod_{X_i\subset \mathrm{P}}\,\sum\limits_{\mathrm{P}_{i}:\,X_{i}=\bigcup_{l_{i}}Z_{l_i}}
            \mathfrak{A}_{|\mathrm{P}_{i}|}(t_{2},\{Z_{1_i}\},\ldots,\{Z_{|\mathrm{P}_i|_i}\})
            \,\mathcal{S}_n^{\pm}\prod_{Z_{l_i}\subset \mathrm{P}_{i}}f_{|Z_{l_i}|}(Z_{l_i})=\\
      &&=\sum\limits_{\mathrm{P}:\,(1,\ldots,n)=\bigcup_i X_i}\,\,
            \sum\limits_{\mathrm{P}^{'}:\,(\{X_1\},\ldots,\{X_{|\mathrm{P}|}\})=\bigcup_j Q_j}
            (-1)^{|\mathrm{P}^{'}|-1}(|\mathrm{P}^{'}|-1)!
            \prod\limits_{Q_{j}\subset\mathrm{P}^{'}}\mathcal{G}_{|\theta(Q_{j})|}(-t_{1},\theta(Q_{j}))\times\\
      &&\times\prod_{X_i\subset \mathrm{P}}\,\sum\limits_{\mathrm{P}_{i}:\,X_{i}=\bigcup_{l_i}Z_{l_i}}
            \,\,\sum\limits_{\mathrm{P}^{'}_{i}:\,(\{Z_{1_i}\},\ldots,\{Z_{|\mathrm{P}_i|_i}\})=\bigcup_{k_i}R_{k_i}}
            (-1)^{|\mathrm{P}^{'}_{i}|-1}(|\mathrm{P}^{'}_{i}|-1)!\times\\
      &&\times\prod\limits_{R_{k_i}\subset\mathrm{P}^{'}_{i}}
            \mathcal{G}_{|\theta(R_{k_i})|}(-t_{2},\theta(R_{k_i}))\,\mathcal{S}_n^{\pm}\prod_{Z_{l_i}\subset
            \mathrm{P}_{i}}f_{|Z_{l_i}|}(Z_{l_i}),
\end{eqnarray*}
where $(\{Z_{1_i}\},\ldots,\{Z_{|\mathrm{P}_i|_i}\})$
is a set, elements of which are $|\mathrm{P}_{i}|$ subsets $Z_{l_i}\subset X_{i}$ of the partition
$\mathrm{P}_{i}:\,X_{i}=\cup_{l_i}Z_{l_i}.$
Collecting similar terms of products of operators $f_{n}$, $n\geq1$, and taking into account
the group property of operators  $\mathcal{G}_{n}(-t)$, $n\geq1$, (\ref{groupG}) we establish
\begin{eqnarray*}
      &&\big(\mathfrak{A}_{t_{1}}\big(\mathfrak{A}_{t_{2}}(f)\big)\big)_{n}(1,\ldots,n)=\\
      &&=\sum\limits_{\mathrm{P}:\,(1,\ldots,n)=\bigcup_i X_i}\,\,
          \sum\limits_{\mathrm{P}^{'}:\,(\{X_1\},\ldots,\{X_{|\mathrm{P}|}\})=\bigcup_k Q_k}
          (-1)^{|\mathrm{P}^{'}|-1}(|\mathrm{P}^{'}|-1)!
          \prod\limits_{Q_{k}\subset\mathrm{P}^{'}}\mathcal{G}_{|\theta(Q_{k})|}(-t_{1}-t_{2},\theta(Q_{k})) \times\\
      &&\times\mathcal{S}_n^{\pm}\prod_{X_i\subset \mathrm{P}}f_{|X_i|}(X_i)=\sum\limits_{\mathrm{P}:\,(1,\ldots,n)=\bigcup_i X_i}
          \mathfrak{A}_{|\mathrm{P}|}(t_{1}+t_{2},\{X_1\},\ldots,\{X_{|\mathrm{P}|}\})
          \,\mathcal{S}_n^{\pm}\prod_{X_i\subset \mathrm{P}}f_{|X_i|}(X_i)=\\
      &&=\big(\mathfrak{A}_{t_{1}+t_{2}}(f)\big)_{n}(1,\ldots,n).
\end{eqnarray*}

The strong continuity property of the group $\mathfrak{A}_t(\cdot)$ over the parameter $t\in \mathbb{R}^{1}$
is the straightforward consequence of the strong continuity of groups (\ref{groupG}) of the von Neumann equations \cite{DauL_5}.
Indeed, according to identity (\ref{Stirl}) the following  equality is valid
\begin{eqnarray*}
       &&\sum\limits_{\mathrm{P}:\,(1,\ldots,n)=\bigcup_i X_i}\,\,
           \sum\limits_{\mathrm{P}^{'}:\,(\{X_1\},\ldots,\{X_{|\mathrm{P}|}\})=\bigcup_k Z_k}
           (-1)^{|\mathrm{P}^{'}|-1}(|\mathrm{P}^{'}|-1)!\,\mathcal{S}_n^{\pm}
           \prod_{X_i\subset \mathrm{P}}f_{|X_i|}(X_i)= f_{n}(1,\ldots,n).
\end{eqnarray*}
Therefore, for $f_{n}\in\mathfrak{L}^{1}_{0}(\mathcal{H}^{\pm}_{n}) \subset \mathfrak{L}^{1}(\mathcal{H}^{\pm}_{n})$,
$n\geq 1$, it holds
\begin{eqnarray*}
       &&\lim_{t\rightarrow 0}\big\|\sum\limits_{\mathrm{P}:\,(1,\ldots,n)=\bigcup_iX_i}\,\,
            \sum\limits_{\mathrm{P}^{'}:\,(\{X_1\},\ldots,\{X_{|\mathrm{P}|}\})=\bigcup_k Z_k}
            (-1)^{|\mathrm{P}^{'}|-1}(|\mathrm{P}^{'}|-1)!\prod\limits_{Z_{k}\subset\mathrm{P}^{'}}
            \mathcal{G}_{|\theta(Z_{k})|}(-t,\theta(Z_{k}))\times\\
       &&\times \mathcal{S}_n^{\pm}\prod_{X_i\subset \mathrm{P}}f_{|X_i|}(X_i)-f_{n}\big\|_{\mathfrak{L}^{1}(\mathcal{H}^{\pm}_{n})}
            \leq\sum\limits_{\mathrm{P}:\,(1,\ldots,n)=\bigcup_iX_i}\,\,
            \sum\limits_{\mathrm{P}^{'}:\,(\{X_1\},\ldots,\{X_{|\mathrm{P}|}\})=\bigcup_k Z_k}(|\mathrm{P}^{'}|-1)!\times\\
       &&\times\lim_{t\rightarrow 0}\big\|\prod\limits_{Z_{k}\subset\mathrm{P}^{'}}\mathcal{G}_{|\theta(Z_{k})|}(-t,\theta(Z_{k}))\,
            \mathcal{S}_n^{\pm}\prod_{X_i\subset \mathrm{P}}f_{|X_i|}(X_i)
            -\mathcal{S}_n^{\pm}\prod_{X_i\subset \mathrm{P}}f_{|X_i|}(X_i)\big\|_{\mathfrak{L}^{1}(\mathcal{H}^{\pm}_{n})}.
\end{eqnarray*}
In view of the fact that the mapping $\mathcal{G}_{n}(-t)$ is strong continuous group,
i.e. in the sense of the norm convergence in $\mathfrak{L}^{1}(\mathcal{H}^{\pm}_{n})$ there exists the limit
\begin{eqnarray*}
   &&\lim\limits_{t\rightarrow0}(\mathcal{G}_{n}(-t)f_{n}-f_{n})=0,
\end{eqnarray*}
this implyies that for mutually disjoint subsets  $X_{i}\subset Y,$ the following equality is also valid
\begin{eqnarray*}
   &&\lim_{t\rightarrow 0}\big(\prod\limits_{Z_{k}\subset\mathrm{P}^{'}}\mathcal{G}_{|\theta(Z_{k})|}(-t,\theta(Z_{k}))f_{n}-f_{n}\big)=0.
\end{eqnarray*}
For $f_{n}\in\mathfrak{L}_{0}^{1}(\mathcal{H}^{\pm}_{n})\subset\mathfrak{L}^{1}(\mathcal{H}^{\pm}_{n}),$ we finally establish
\begin{eqnarray*}
   &&\lim_{t\rightarrow 0}\big\|\big(\mathfrak{A}_{t}(f)\big)_{n}-f_{n}\big\|_{\mathfrak{L}^{1}(\mathcal{H}^{\pm}_{n})}=0.
\end{eqnarray*}

To construct the generator $\mathfrak{N}(\cdot)$ of strong continuous group (\ref{groupKum})
we first differentiate it in the sense of the pointwise convergence in the spaces
$\mathfrak{L}^{1}(\mathcal{H}^{\pm}_{n})$, i.e.
for arbitrary $\psi_{n}\in \mathcal{D}(H_{n})\subset\mathcal{H}^{\pm}_{n}$.
Taking into consideration the fact that for
$f_{n}\in\mathfrak{L}_{0}^{1}(\mathcal{H}^{\pm}_{n})\subset\mathcal{D}(\mathfrak{N}(\cdot)_{n})$ equalities (\ref{infOper})
and (\ref{oper Nint1}) hold, for group (\ref{groupKum}) we derive
\begin{eqnarray}\label{derivation}
       &&\lim\limits_{t\rightarrow 0}\frac{1}{t}\big((\mathfrak{A}_{t}(f))_{n}-f_{n}\big)\psi_{n}=\\
       &&=\lim\limits_{t\rightarrow 0}\frac{1}{t}\big(\sum\limits_{\mathrm{P}:(1,\ldots,n)=\bigcup_i X_i}
            \mathfrak{A}_{|\mathrm{P}|}(t,\{X_1\},\ldots,\{X_{|\mathrm{P}|}\})\,
            \mathcal{S}_n^{\pm}\prod_{X_i\subset \mathrm{P}}f_{|X_i|}(X_i)-f_{n}\big)\psi_{n}=\nonumber\\
       &&=\lim\limits_{t\rightarrow 0}\frac{1}{t}\big(\mathfrak{A}_{1}(t,\{1,\ldots,n\})f_{n}-f_{n}\big)\psi_{n}+\nonumber\\
       &&+\sum\limits_{\mbox{\scriptsize $\begin{array}{c}{\mathrm{P}}:(1,\ldots,n)=\bigcup_i X_i,\\|{\mathrm{P}}|>1\end{array}$}}
            \lim\limits_{t\rightarrow 0}\frac{1}{t}\mathfrak{A}_{|\mathrm{P}|}(t,\{X_1\},\ldots,\{X_{|\mathrm{P}|}\})\,
            \mathcal{S}_n^{\pm}\prod_{X_i\subset \mathrm{P}}f_{|X_i|}(X_i)\psi_{n}=\nonumber\\
       &&=(-\mathcal{N}_{n}f_{n})\psi_{n}
            +\hskip-2mm\sum\limits_{\mbox{\scriptsize $\begin{array}{c}{\mathrm{P}}:(1,\ldots,n)=\bigcup_iX_i,\\|{\mathrm{P}}|>1\end{array}$}}
            \hskip-2mm\big(-\mathcal{N}^{\mathrm{int}}(\{X_1\},\ldots,\{X_{|\mathrm{P}|}\})\big)\,\mathcal{S}_n^{\pm}
            \prod_{X_i\subset \mathrm{P}}f_{|X_i|}(X_i)\psi_{n},\nonumber
\end{eqnarray}
where the operator $\mathcal{N}^{\mathrm{int}}$ is given by the formula
\begin{eqnarray}\label{gener Nnl}
       &&\mathcal{N}^{\mathrm{int}}(\{X_1\},\ldots,\{X_{|\mathrm{P}|}\})\equiv
            \sum\limits_{\mbox{\scriptsize$\begin{array}{c}{Z_{1}\subset X_{1}},\\Z_{1}\neq\emptyset \end{array}$}}\ldots
            \sum\limits_{\mbox{\scriptsize
            $\begin{array}{c}{Z_{|\mathrm{P}|}\subset X_{|\mathrm{P}|}},\\Z_{|\mathrm{P}|}\neq\emptyset\end{array}$}}
            \mathcal{N}_{\mathrm{int}}^{(\sum\limits_{r=1}^{|\mathrm{P}|}|Z_{{r}}|)}
            (Z_{{1}},\ldots,Z_{{|\mathrm{P}|}}).
\end{eqnarray}
Hence in view of equality (\ref{derivation})
for $f_{n}\in\mathfrak{L}_{0}^{1}(\mathcal{H}^{\pm}_{n})\subset\mathcal{D}
(\mathfrak{N}(\cdot)_n)\subset\mathfrak{L}^{1}(\mathcal{H}^{\pm}_{n}),\, n\geq1,$ in the sense of the norm convergence
in $\mathfrak{L}^{1}(\mathcal{H}^{\pm}_{n}),$  we finally establish
\begin{eqnarray*}
   &&\lim_{t\rightarrow 0}\big\| \frac{1}{t}\big((\mathfrak{A}_{t}(f))_{n}- f_{n}\big)
       -\big(\mathfrak{N}(f)\big)_{n}\big\|_{\mathfrak{L}^{1}(\mathcal{H}^{\pm}_{n})}=0,
\end{eqnarray*}
where $\mathfrak{N}(\cdot)$ is given by formula (\ref{Nnl}).
\end{proof}

\subsection{The existence and uniqueness theorem}
Let $Y=(1,\ldots,s)$ is above accepted notation. For abstract initial-value problem
(\ref{vonNeumanHierarchyAlternateDescriptionFB}),(\ref{vonNeumanHierarchyAlternateDescriptionInitialValue})
in the spaces $\mathfrak{L}^{1}(\mathcal{F}^{\pm}_\mathcal{H})$ of sequences of trace class operators the following theorem is true.
\begin{theorem}
The solution of initial-value problem (\ref{vonNeumanHierarchyAlternateDescriptionFB}),(\ref{vonNeumanHierarchyAlternateDescriptionInitialValue})
of the von Neumann hierarchy
is determined by the following expansion
\begin{eqnarray*}
    && g_{s}(t,Y)=\sum\limits_{\mathrm{P}:\,Y=\bigcup_iX_i}
        \mathfrak{A}_{|\mathrm{P}|}(t,\{X_1\},\ldots,\{X_{|\mathrm{P}|}\})\,
        \mathcal{S}_s^{\pm}\prod_{X_i\subset \mathrm{P}}g_{|X_i|}(0,X_i),\quad s\geq1,
\end{eqnarray*}
where $\mathfrak{A}_{|\mathrm{P}|}(t)$ is the $|\mathrm{P}|$-order cumulant of the groups of operators (\ref{groupG})
defined by formula (\ref{cumulantP}).

For $g_{n}(0)\in \mathfrak{L}^{1}_{0}(\mathcal{H}^{\pm}_{n})\subset
\mathfrak{L}^{1}(\mathcal{H}^{\pm}_{n})$ one is a strong (classical) solution and for arbitrary initial data $g_{n}(0) \in
\mathfrak{L}^{1}(\mathcal{H}^{\pm}_{n})$ one is a weak (generalized) solution.
\end{theorem}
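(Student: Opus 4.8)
The plan is to recognise the asserted formula as the orbit of the nonlinear group built in the preceding theorem, and then to promote that abstract statement to a genuine solution using the recurrent (triangular) structure of the hierarchy. Observe first that the right‑hand side of the asserted expansion is exactly the component $\big(\mathfrak{A}_{t}(g(0))\big)_{s}$ of the group (\ref{groupKum}) evaluated at $f=g(0)$; so it suffices to show that $g(t):=\mathfrak{A}_{t}(g(0))$ solves the Cauchy problem (\ref{vonNeumanHierarchyAlternateDescriptionFB}),(\ref{vonNeumanHierarchyAlternateDescriptionInitialValue}) and that this solution is unique in the stated class. The initial condition is immediate from the computation already recorded before: $\mathfrak{A}_{|\mathrm{P}|}(0,\cdot)=0$ for $|\mathrm{P}|\ge2$ by identity (\ref{Stirl}), so only the $|\mathrm{P}|=1$ term survives at $t=0$ and $g_{s}(0,Y)=g_{s}(0,Y)$.

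For the differential equation I would differentiate $g_{s}(t,Y)$ in $t$ directly. For each fixed $s$ this is a \emph{finite} sum over partitions, and after substituting definition (\ref{cumulantP}) it is a finite linear combination of products $\prod_{k}\mathcal{G}_{|\theta(Z_{k})|}(-t,\theta(Z_{k}))$ applied to the fixed element $\mathcal{S}_{s}^{\pm}\prod_{X_{i}}g_{|X_{i}|}(0,X_{i})$, which lies in $\mathfrak{L}_{0}^{1}(\mathcal{H}^{\pm}_{s})$ when $g(0)\in\mathfrak{L}_{0}^{1}$. Each such product is then norm‑differentiable in $t$ with derivative governed by the infinitesimal generators computed in Section~3.1, namely (\ref{infOper}), (\ref{diffCumulant}) and (\ref{oper Nint1}), and the Leibniz rule for strongly differentiable operator‑valued functions applies. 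Collecting the terms — the same rearrangement of nested partition sums that produced (\ref{Nnl}) from (\ref{derivation}) — identifies $\frac{d}{dt}g_{s}(t,Y)$ with $\big(\mathfrak{N}(g(t))\big)_{s}$; comparing (\ref{Nnl}) with the right‑hand side of (\ref{vonNeumanHierarchyAlternateDescriptionFB}) (the sums over $|\mathrm{P}|\neq1$ and over $|\mathrm{P}|>1$ coincide, and the $-\mathcal{N}_{s}$ term carries the full interaction) shows this is precisely the hierarchy. Hence $g(t)$ is a strong solution for $g(0)\in\mathfrak{L}_{0}^{1}$, with estimate (\ref{ocinka}) keeping $g(t)$ in $\mathfrak{L}^{1}(\mathcal{F}^{\pm}_{\mathcal{H}})$ for all $t$.

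Uniqueness I would obtain by induction on $s$ from the recurrent structure of (\ref{vonNeumanHierarchyAlternateDescriptionFB}). For $s=1$ the equation is the one‑particle von Neumann equation $\frac{d}{dt}g_{1}=-\mathcal{N}_{1}g_{1}$, whose solution is unique by Proposition~1. Assuming $g_{1},\dots,g_{s-1}$ determined, the equation for $g_{s}$ is an inhomogeneous linear von Neumann equation with inhomogeneity built from $g_{1},\dots,g_{s-1}$; the difference of two solutions with the same initial datum solves the homogeneous equation with zero datum and therefore vanishes by the Duhamel representation (the analogue of (\ref{iter2kum})). This gives uniqueness of the strong solution. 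For arbitrary $g(0)\in\mathfrak{L}^{1}(\mathcal{F}^{\pm}_{\mathcal{H}})$ I would approximate by $\mathfrak{L}_{0}^{1}$ data, use that each component map $\big(\mathfrak{A}_{t}(\cdot)\big)_{n}$ is polynomial in the components of $f$ (hence continuous, with bounds from (\ref{ocinka})), and pass to the limit in the integrated/weak form of the hierarchy — the pairing of each component against an arbitrary bounded operator, integrated in $t$ — obtaining the weak (generalized) solution together with uniqueness inherited from the dense subspace.

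The step I expect to be the main obstacle is the domain bookkeeping behind the strong‑solution claim: the groups $\mathcal{G}_{n}(-t)$ do not preserve $\mathfrak{L}_{0}^{1}$ (compactness of kernel supports is destroyed under the free and interacting evolution), so one cannot simply iterate the $t=0$ differentiation of the preceding theorem. One must instead work on the correct invariant domain, using the invariance of $\mathcal{D}(\mathcal{N}_{n})$ under $\mathcal{G}_{n}(-t)$ (generator‑domain invariance) and the Kato conditions on the potentials $\Phi^{(k)}$, so that every commutator $\mathcal{N}_{\mathrm{int}}^{(\cdot)}$ appearing on the right‑hand side of (\ref{vonNeumanHierarchyAlternateDescriptionFB}) is meaningful there and the differentiated identity is an equality in $\mathfrak{L}^{1}(\mathcal{H}^{\pm}_{s})$ rather than merely pointwise on $\mathcal{D}(H_{s})$. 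The accompanying combinatorial rearrangement of the partition sums is lengthy but routine.
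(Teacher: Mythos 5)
Your proposal is essentially correct, and on the strong--solution part it coincides with the paper: the expansion is recognized as the orbit $\mathfrak{A}_{t}(g(0))$ of the nonlinear group of Theorem 1, whose generator (\ref{Nnl}) is exactly the right-hand side of (\ref{vonNeumanHierarchyAlternateDescriptionFB}), and the initial condition follows from identity (\ref{Stirl}); the paper disposes of this part with the single sentence ``Theorem 1 implies\dots'', so your worry about domain bookkeeping (non-invariance of $\mathfrak{L}^{1}_{0}$ under $\mathcal{G}_{n}(-t)$) is a refinement the paper itself does not address rather than a gap relative to it. Where you genuinely diverge is the weak-solution claim: the paper does not approximate the initial data at all. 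It introduces the functional (\ref{func-g}) with degenerate test operators $f_{s}\in\mathfrak{L}_{0}(\mathcal{H}^{\pm}_{s})$, rewrites the cumulant expansion (\ref{funk-gN}) so that the adjoint groups $\mathcal{G}_{|\theta(Z_{k})|}(t)$ act on $f_{s}$, differentiates the functional in $t$ and re-sums the partition sums to obtain (\ref{d_funk-gN}) directly for arbitrary $g(0)\in\mathfrak{L}^{1}(\mathcal{F}^{\pm}_{\mathcal{H}})$; this duality computation buys the weak form in one stroke, whereas your route (approximate $g(0)$ by $\mathfrak{L}^{1}_{0}$ data, use that each component of $\mathfrak{A}_{t}(\cdot)$ is polynomial with the uniform bound (\ref{ocinka}), pass to the limit in the time-integrated weak formulation) is workable but requires the continuity-in-data and dominated-convergence details you only sketch, and your test objects must be taken in the paper's class $\mathfrak{L}_{0}$ (degenerate, smooth, compactly supported kernels) rather than ``arbitrary bounded operators'', since $\mathcal{N}_{s}f_{s}$ and $\mathcal{N}^{\mathrm{int}}f_{s}$ must make sense (the paper moreover invokes bounded interaction potentials at this point). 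Finally, your uniqueness argument --- induction on $s$ using the triangular structure, treating each equation as an inhomogeneous von Neumann equation and applying the Duhamel analogue (\ref{iter2kum}) --- is sound and is genuinely additional content: the paper's proof of Theorem 2 never carries out a uniqueness argument, only existence of the strong and weak solutions, so on this point your proposal is more complete than the printed proof.
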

\begin{proof} The Theorem 1 implyies that for initial data
$g_{n}(0)\in\mathfrak{L}_{0}^{1}(\mathcal{H}^{\pm}_{n})\subset\mathfrak{L}^{1}(\mathcal{H}^{\pm}_{n}),\, n\geq1,$ sequence (\ref{rozvNF-N_F})
is a strong solution of initial-value problem
(\ref{vonNeumanHierarchyAlternateDescriptionFB}),(\ref{vonNeumanHierarchyAlternateDescriptionInitialValue}).

Let us prove that in case of arbitrary initial data $g(0)\in\mathfrak{L}^{1}(\mathcal{F}^{\pm}_\mathcal{H})$
expansion (\ref{rozvNF-N_F}) is a weak solution of
the initial-value problem of the von Neumann hierarchy (\ref{vonNeumanHierarchyAlternateDescriptionFB}).
With this purpose we introduce the functional
\begin{eqnarray}\label{func-g}
    &&\big(f_{s},g_{s}(t)\big)\doteq
       \mathrm{Tr}_{1,\ldots,s}\,f_{s}(1,\ldots,s)\,g_{s}(t,1,\ldots,s),
\end{eqnarray}
where $f_{s}\in \mathfrak{L}_{0}(\mathcal{H}^{\pm}_{s})$  is degenerate bounded operator with infinitely times differentiable
kernel with compact support and the operator $g_{s}(t)$ is defined by expansion (\ref{rozvNF-N_F}).
According to estimate (\ref{ocinka}), for $f_{s}\in \mathfrak{L}_{0}(\mathcal{H}^{\pm}_{s})$, functional (\ref{func-g}) exists.

Using expansion (\ref{rozvNF-N_F}), we transform  functional (\ref{func-g}) as follows
\begin{eqnarray}\label{funk-gN}
    &&\big(f_{s},g_{s}(t)\big)=\mathrm{Tr}_{1,\ldots,s}\sum\limits_{\mathrm{P}:Y=\bigcup_iX_i}f_{s}\,
          \mathfrak{A}_{|\mathrm{P}|}(t,\{X_1\},\ldots,\{X_{|\mathrm{P}|}\})
          \mathcal{S}_s^{\pm}\prod_{X_i\subset \mathrm{P}}g_{|X_i|}(0,X_i)=\\ \nonumber
    &&=\mathrm{Tr}_{1,\ldots,s}\,\sum\limits_{\mathrm{P}:\,Y=\bigcup_iX_i}\,\,
          \sum\limits_{\mathrm{P}^{'}:\,(\{X_1\},\ldots,\{X_{|\mathrm{P}|}\})=\bigcup_{k}Z_{k}}
          (-1)^{|\mathrm{P}^{'}|-1}(|\mathrm{P}^{'}|-1)!\times\\ \nonumber
    &&\times\prod\limits_{Z_{k}\subset\mathrm{P}^{'}}
          \mathcal{G}_{|\theta(Z_{k})|}(t,\theta(Z_{k}))f_{s}\,
          \mathcal{S}_s^{\pm}\prod\limits_{X_{i}\subset\mathrm{P}}g_{|X_{i}|}(0,X_{i})\nonumber,
\end{eqnarray}
where the group of operators
$\mathcal{G}_{|\theta(Z_{k})|}(t)$ is adjoint to the group $\mathcal{G}_{|\theta(Z_{k})|}(-t)$ in the sense of functional (\ref{func-g}).

For $g_{n}(0)\in\mathfrak{L}^{1}(\mathcal{H}^{\pm}_{n})$ and
$f_{n}\in \mathfrak{L}_{0}(\mathcal{H}^{\pm}_{n})$ within the context of the Proposition 1 \cite{BR} the following equality holds
\begin{eqnarray}
   &&\lim\limits_{\triangle t\rightarrow0}\mathrm{Tr}_{1,\ldots,n}\,
       \big(\frac{1}{\triangle t}(\mathcal{G}_{n}(t+\triangle t)f_{n}-\mathcal{G}_{n}(t)f_{n})
       -\mathcal{G}_{n}(t)\mathcal{N}_{n}f_{n}\big)g_{n}(0)=0,\nonumber
\end{eqnarray}
and, therefore we have
\begin{eqnarray*}
    &&\lim\limits_{\triangle t\rightarrow0}\mathrm{Tr}_{1,\ldots,s}\,\frac{1}{\triangle t}
        \big(\prod\limits_{Z_{k}\subset\mathrm{P}^{'}}
        \mathcal{G}_{|\theta(Z_{k})|}(t+\triangle t,\theta(Z_{k}))f_{s}(Y)-
        \prod\limits_{Z_{k}\subset\mathrm{P}^{'}}\mathcal{G}_{|\theta(Z_{k})|}(t,\theta(Z_{k})f_{s}(Y)\big)\times\\
    &&\times\mathcal{S}_s^{\pm}\prod\limits_{X_{i}\subset\mathrm{P}}g_{|X_{i}|}(0,X_{i})=\\
    &&=\mathrm{Tr}_{1,\ldots,s}\,\sum\limits_{Z_{l}\subset\mathrm{P}^{'}}
        \mathcal{N}_{|\theta(Z_{l})|}(\theta(Z_{l}))f_{s}(Y)\,
        \prod\limits_{Z_{k}\subset\mathrm{P}^{'}}\mathcal{G}_{|\theta(Z_{k})|}(-t,\theta(Z_{k}))
        \mathcal{S}_s^{\pm}\prod\limits_{X_{i}\subset\mathrm{P}}g_{|X_{i}|}(0,X_{i}).
\end{eqnarray*}
For $f_{s}\in \mathfrak{L}_{0}(\mathcal{H}^{\pm}_{s})$
and bounded interaction potentials the limit functional exists.
Then, taking into account that symmetrization and anti-symmetrization operators (\ref{Sn}) are integrals of motion,
the derivative over time variable of functional (\ref{func-g}) represented by (\ref{funk-gN}) is given by
\begin{eqnarray*}
    &&\frac{d}{dt}\big(f_{s},\,g_{s}(t)\big)=\big(\mathcal{N}_{s}f_{s},\,g_{s}(t)\big)+
         \mathrm{Tr}_{1,\ldots,s}\,\sum\limits_{\mbox{\scriptsize $\begin{array}{c}{\mathrm{P}}:\,Y=\bigcup_i X_i,\\|{\mathrm{P}}|>1\end{array}$}}
         \sum\limits_{\mathrm{P}^{'}:\,(\{X_1\},\ldots,\{X_{|\mathrm{P}|}\})=\bigcup_{k}Z_{k}}
         (-1)^{|\mathrm{P}^{'}|-1}\times\\
    &&\times(|\mathrm{P}^{'}|-1)!\sum\limits_{Z_{l}\subset\mathrm{P}^{'}}
         \mathcal{N}_{|\theta(Z_{l})|}(\theta(Z_{l}))f_{s}(Y)\,
         \prod\limits_{Z_{k}\subset\mathrm{P}^{'}}\mathcal{G}_{|\theta(Z_{k})|}(-t,\theta(Z_{k}))\,
         \mathcal{S}_s^{\pm}\prod\limits_{X_{i}\subset\mathrm{P}}g_{|X_{i}|}(0,X_{i})=\\
    &&=\big(\mathcal{N}_{s}f_{s},\,g_{s}(t)\big)+
         \mathrm{Tr}_{1,\ldots,s}\,\sum\limits_{\mbox{\scriptsize $\begin{array}{c}{\mathrm{P}}:\,Y=\bigcup_iX_i,\\|{\mathrm{P}}|>1\end{array}$}}
         \mathcal{N}^{\mathrm{int}}(\{X_1\},\ldots,\{X_{|\mathrm{P}|}\})f_{s}(Y)\,
         \mathcal{S}_s^{\pm}\prod_{X_i\subset \mathrm{P}}g_{|X_i|}(t,X_i),
\end{eqnarray*}
where the operator $\mathcal{N}^{\mathrm{int}}(\{X_1\},\ldots,\{X_{|\mathrm{P}|}\})$
is defined by formulae (\ref{gener Nnl}) and (\ref{oper Nint2}).

Thus, the result of the differentiation over time variable of functional (\ref{func-g}) takes the form
\begin{eqnarray}\label{d_funk-gN}
    &&\frac{d}{dt}\big(f_{s},\,g_{s}(t)\big)=\big(\mathcal{N}_{s}f_{s},\,g_{s}(t)\big)+\\
    &&+\mathrm{Tr}_{1,\ldots,s}\,\sum\limits_{\mbox{\scriptsize $\begin{array}{c}{\mathrm{P}}:\,Y=\bigcup_iX_i,\\|{\mathrm{P}}|>1\end{array}$}}
        \mathcal{N}^{\mathrm{int}}(\{X_1\},\ldots,\{X_{|\mathrm{P}|}\})f_{s}(Y)\,\mathcal{S}_s^{\pm}
        \prod_{X_i\subset \mathrm{P}}g_{|X_i|}(t,X_i).\nonumber
\end{eqnarray}
Equality (\ref{d_funk-gN}) means that for arbitrary initial data
$g_{n}(0)\in\mathfrak{L}^{1}(\mathcal{H}^{\pm}_{n}),$ $n\geq 1,$ a weak solution of the initial-value problem
of the von Neumann hierarchy (\ref{vonNeumanHierarchyAlternateDescriptionFB}),(\ref{vonNeumanHierarchyAlternateDescriptionInitialValue})
is determined by formula (\ref{rozvNF-N_F}).
\end{proof}

The statement of this theorem also holds for the correlation operators of clusters of particles introduced in section 2.5.

\section{On applications of the von Neumann hierarchy}

\subsection{Functionals of the mean-value and the dispersion of observables}
Observables of the Bose and Fermi many-particle systems are sequences $A=(A_{0},A_{1},\ldots,$ $A_{n},\ldots)$ of self-adjoint
operators defined on the Fock spaces $\mathcal{F}^{\pm}_{\mathcal{H}}$ and $A_{0}\in \mathbb{C}$.
We consider observables as elements of the spaces of sequences of bounded operators
$\mathfrak{L}(\mathcal{F}^{\pm}_\mathcal{H})$ with an operator norm \cite{DauL_5,Pe95}.
The average values of observables (mean values of observables) are defined by the positive continuous
linear functional on the spaces $\mathfrak{L}(\mathcal{F}^{\pm}_\mathcal{H})$
\begin{eqnarray}\label{averageD}
     &&\langle A\rangle=(I,D)^{-1}\sum\limits_{n=0}^{\infty}\frac{1}{n!}
         \,\mathrm{Tr}_{1,\ldots,n}\,A_{n}\,D_{n},
\end{eqnarray}
where $\mathrm{Tr}_{1,\ldots,n}$ are the partial traces over $1,\ldots,n$ particles,
$(I,D)={\sum\limits}_{n=0}^{\infty}\frac{1}{n!}\mathrm{Tr}_{1,\ldots,n}D_{n}$ is a normalizing
factor (grand canonical partition function) and $D=(1,D_{1},\ldots,D_{n},\ldots)$ is
a sequence of self-adjoint positive density operators defined on the Fock spaces $\mathcal{F}^{\pm}_{\mathcal{H}}$ that
describes the states of a quantum system of non-fixed number of particles.
For $D\in \mathfrak{L}^{1} (\mathcal{F}^{\pm}_\mathcal{H})$
and $A\in \mathfrak{L}(\mathcal{F}^{\pm}_\mathcal{H})$ average value functional (\ref{averageD}) exists.
It determines a duality between observables and states.

We remark that in case of a system of fixed number $N$ of particles the observables
and states are one-component sequences, respectively, $A^{(N)}=(0,\ldots,0,A_{N},0,\ldots),$
$D^{(N)}=(0,\ldots,$ $0,D_{N},0,\ldots)$, therefore, formula (\ref{averageD}) for an average value reduces to
\begin{eqnarray*}
   &&\langle A^{(N)}\rangle=(\mathrm{Tr}_{1,\ldots,N}D_{N})^{-1}\mathrm{Tr}_{1,\ldots,N}A_{N}D_{N}.
\end{eqnarray*}

In the framework of the description of states by correlation operators (\ref{gfromDFB}) the average values, for example,
of additive-type observables $A^{(1)}=(0,a_{1},\ldots,{\sum\limits}_{i=1}^{n}a_{1}(i),\ldots)$ are given by the functional
\begin{eqnarray}\label{averageg}
      &&\langle A^{(1)}\rangle(t)=\sum\limits_{n=0}^{\infty}\frac{1}{n!}
         \,\mathrm{Tr}_{1,\ldots,1+n}\,a_{1}(1)g_{1+n}(t,1,2,\ldots,1+n),
\end{eqnarray}
or in general case of $s$-particle observables $A^{(s)}=(0,\ldots,0,a_{s}(1,\ldots,s),\ldots,
\sum_{i_{1}<\ldots<i_{s}=1}^{n}a_s(i_{1},\ldots,$ $i_{s}),\ldots)$ by the functional
\begin{eqnarray}\label{averagegs}
      &&\langle A^{(s)}\rangle(t)=\frac{1}{s!}\sum\limits_{n=0}^{\infty}\frac{1}{n!}
         \,\mathrm{Tr}_{1,\ldots,s+n}\,a_{s}(1,\ldots,s)g_{1+n}(t,\{1,\ldots,s\},s+1,\ldots,s+n),
\end{eqnarray}
where the correlation operator $g_{1+n}(t)$ is defined by expansion (\ref{rozvNF-N_F_clusters}).
For $A^{(s)}\in \mathfrak{L}(\mathcal{F}^{\pm}_\mathcal{H})$ and $g^{(s)}\in \mathfrak{L}^{1}(\mathcal{F}^{\pm}_\mathcal{H})$
functional (\ref{averagegs}) exists.

In view of the definition of average value functional (\ref{averagegs}), the dispersion of an additive-type observable is
defined by the functional
\begin{eqnarray}\label{dispg}
     &&\langle(A^{(1)}-\langle A^{(1)}\rangle(t))^2\rangle(t)=\sum\limits_{n=0}^{\infty}\frac{1}{n!}
         \,\mathrm{Tr}_{1,\ldots,1+n}\,(a_1^2(1)-\langle A^{(1)}\rangle^2(t)) g_{1+n}(t,1,\ldots,1+n) +\nonumber\\
     &&+\sum\limits_{n=0}^{\infty}\frac{1}{n!}
         \,\mathrm{Tr}_{1,\ldots,2+n}\,a_{1}(1)a_{1}(2)g_{2+n}(t,1,\ldots,2+n),
\end{eqnarray}
where $\langle A^{(1)}\rangle(t)$ is determined by expression (\ref{averageg}).
For $A^{(1)}\in \mathfrak{L}(\mathcal{F}^{\pm}_\mathcal{H})$ and $g\in \mathfrak{L}^{1} (\mathcal{F}^{\pm}_\mathcal{H})$
functional (\ref{dispg}) exists.

To justify definition (\ref{averagegs}), i.e. to derive functional (\ref{averagegs}) from (\ref{averageD}), 
we introduce necessary notions and formulate some equalities.

For arbitrary $f=(f_{0},f_{1},\ldots,f_{n},\ldots)\in\mathfrak{L}^{1}(\mathcal{F}^{\pm}_\mathcal{H})$ and
$Y\equiv(1,\ldots,s)$ we define the linear mapping $\mathfrak{d}_{Y}:f\rightarrow \mathfrak{d}_{Y}f$
by the formula
\begin{eqnarray}\label{oper_d}
    &&(\mathfrak{d}_{Y} f)_{n}\doteq f_{|Y|+n}(Y,|Y|+1,\ldots,|Y|+n),\quad n\geq0.
\end{eqnarray}
For the set $\{Y\}$ defined above in section 2.4 we have respectively
\begin{eqnarray}\label{oper_c}
    &&(\mathfrak{d}_{\{Y\}} f)_{n}\doteq f_{1+n}(\{Y\},s+1,\ldots,s+n),\quad n\geq0.
\end{eqnarray}
On sequences $\mathfrak{d}_{Y}f$ and $\mathfrak{d}_{Y'}\widetilde{f}$ we introduce the $\circledast$-product
\begin{eqnarray}\label{roundStarProduct}
    &&(\mathfrak{d}_{Y}f\circledast\mathfrak{d}_{Y'}\widetilde{f})_{|X|}(X)\doteq
        \sum\limits_{Z\subset X}\mathcal{S}^{\pm}_{|\theta(Y \bigcup Y'\bigcup X)|}f_{|Z|+|Y|}(Y,Z)
        \,\widetilde{f}_{|X\backslash Z|+|Y'|}(Y',X\backslash Z),
\end{eqnarray}
where $\mathcal{S}^{\pm}_{|\cdot|}$ are operators defined by formula (\ref{Sn}), $\theta$
is the declasterization mapping defined by (\ref{Theta}),
$X,Y,Y'$ are the sets, elements of which characterize clusters of particles,
$\sum_{Z\subset X}$ is the sum over all subsets $Z$ of the set $X$.
In particular case ($Y=\emptyset$, $Y'=\emptyset$) definition (\ref{roundStarProduct}) reduces to (\ref{Product}).

For $f=(0,f_{1},\ldots,f_{n},\ldots)$,  $f_{n}\in\mathfrak{L}^{1}(\mathcal{H}^{\pm}_{n})$,
according to definitions of mappings (\ref{circledExp}) and (\ref{oper_c}) the following equality holds
\begin{eqnarray}\label{d_gamma}
      &&\mathfrak{d}_{\{Y\}}\mathbb{E}\mathrm{xp}_{\circledast} f=
      \mathbb{E}\mathrm{xp}_{\circledast}f\circledast\mathfrak{d}_{\{Y\}}f,
\end{eqnarray}
and for mapping (\ref{oper_d}) correspondingly,
\begin{eqnarray}\label{oper_dex}
      &&\mathfrak{d}_{Y}\mathbb{E}\mathrm{xp}_{\circledast}f=
          \mathbb{E}\mathrm{xp}_{\circledast} f\circledast\sum\limits_{\mathrm{P}:\,Y=\bigcup_i X_{i}}
          \mathfrak{d}_{X_1}f\circledast\ldots\circledast \mathfrak{d}_{X_{|\mathrm{P}|}}f,
\end{eqnarray}
where ${\sum\limits}_{\mathrm{P}:\,Y=\bigcup_i X_{i}}$ is the sum over all possible partitions $\mathrm{P}$
of the set $Y\equiv(1,\ldots,s)$ into $|\mathrm{P}|$ nonempty mutually disjoint subsets $X_i\subset Y$.

For $f_{n}\in\mathfrak{L}^{1}(\mathcal{H}^{\pm}_{n})$ an analogue of the annihilation operator is defined by the formula
\begin{eqnarray}\label{a}
    &&(\mathfrak{a}f)_{s}(1,\ldots,s)\doteq \mathrm{Tr}_{s+1}f_{s+1}(1,\ldots,s,s+1),
\end{eqnarray}
and, therefore
\begin{eqnarray*}
     &&(e^{\mathfrak{a}}f)_{s}(1,\ldots,s)=\sum\limits_{n=0}^{\infty}\frac{1}{n!}\mathrm{Tr}_{s+1,\ldots,{s+n}}
         f_{s+n}(1,\ldots,s+n),\quad s\geq0.
\end{eqnarray*}
According to definitions (\ref{a}) and (\ref{roundStarProduct}), for sequences
$f,\widetilde{f}\in\mathfrak{L}^{1}(\mathcal{F}^{\pm}_\mathcal{H})$ the equality holds
\begin{eqnarray}\label{efg}
     &&(e^{\mathfrak{a}}(f\circledast \widetilde{f}))_{0}=
         (e^{\mathfrak{a}}f)_{0}(e^{\mathfrak{a}}\widetilde{f})_{0}.
\end{eqnarray}
In terms of mappings (\ref{oper_d}) and (\ref{oper_c}) generalized cluster expansions (\ref{gce})
take the form
\begin{eqnarray}\label{gcea}
        && \mathfrak{d}_{Y}D(t)=\mathfrak{d}_{\{Y\}}{\mathbb E}\mathrm{xp}_{\circledast}\,\,g(t),
\end{eqnarray}

We now deduce functional (\ref{averagegs}) from the definition of functional (\ref{averageD}), i.e.
\begin{eqnarray*}
     &&\langle A^{(s)}\rangle(t)=\frac{1}{s!}\,(e^\mathfrak{a}D(0))^{-1}_{0}\,
        \mathrm{Tr}_{Y}\,a_{s}(Y)(e^\mathfrak{a}\,\mathfrak{d}_{Y}D(t))_{0},
\end{eqnarray*}
where $\mathrm{Tr}_{Y}\equiv\mathrm{Tr}_{1,\ldots,s}$.
Indeed, using generalized cluster expansions (\ref{gcea}) and as a consequence of equalities (\ref{d_gamma}) and (\ref{efg}), we find
\begin{eqnarray*}
    &&(e^\mathfrak{a}D(0))^{-1}_{0}
        (e^\mathfrak{a}\,\mathfrak{d}_{Y}D(t))_{0}=(e^\mathfrak{a}D(0))^{-1}_{0}
        (e^\mathfrak{a}\,\mathfrak{d}_{\{Y\}}{\mathbb E}\mathrm{xp}_{\circledast}\,\,g(t))_{0}=\\
    &&=(e^\mathfrak{a}D(0))^{-1}_{0}
        (e^\mathfrak{a}\,\mathbb{E}\mathrm{xp}_{\circledast}g(t)\circledast\mathfrak{d}_{\{Y\}}g(t))_{0}
        =(e^\mathfrak{a}D(0))^{-1}_{0}(e^\mathfrak{a}\,\mathbb{E}\mathrm{xp}_{\circledast}g(t))_{0}
        (e^\mathfrak{a}\,\mathfrak{d}_{\{Y\}}g(t))_{0}.
\end{eqnarray*}
Observing that  $\mathbb{E}\mathrm{xp}_{\circledast}g(t)=D(t)$ and
$(e^\mathfrak{a}D(0))_{0}=(e^\mathfrak{a}D(t))_{0}$,
as a final result we derive representation (\ref{averagegs}).

We note that the possibility of the description of states of systems in the framework of correlations arises naturally
as a result of dividing of the series in expression (\ref{averageD}) by the normalizing factor series.
Indeed, the solution of generalized cluster expansions (\ref{gcea}) has the form
\begin{eqnarray*}
    &&\mathfrak{d}_{\{Y\}}g(t)=\,\mathfrak{d}_{\{Y\}}\vartheta\,\mathbb{L}\mathrm{n}_{\circledast}\,D(t).
\end{eqnarray*}
Here for ${f}_{|Y_P|}(Y_P)\in\mathfrak{L}^{1}(\mathcal{H}^{\pm}_{|Y|})$ the mapping $\vartheta$ is introduced by
\begin{eqnarray*}
    (\vartheta f)_{|Y_P|}(Y_P)\doteq f_{|\theta(Y_P)|}(\theta(Y_P))= f_{|Y|}(Y),
\end{eqnarray*}
where $\theta$ is the declasterization mapping (\ref{Theta}) defined on sets $Y_\mathrm{P}\equiv(\{X_1\},\ldots,\{X_{|\mathrm{P}|}\})$
consisting of clusters of particles.
Then in terms of a sequence of the density operators $D(t)$ the following representation of functional (\ref{averageD}) holds
\begin{eqnarray*}
    &&\langle A^{(s)}\rangle(t)=\frac{1}{s!}\,\mathrm{Tr}_{Y}\,a_{s}(Y)
       (e^\mathfrak{a}\,\mathfrak{d}_{\{Y\}}\vartheta\,\mathbb{L}\mathrm{n}_{\circledast}\,D(t))_{0}.
\end{eqnarray*}

It should be emphasized that correlation operators that belong to the spaces
$\mathfrak{L}^{1}(\mathcal{F}^{\pm}_\mathcal{H})$ describe only finitely many particles, i.e. systems
with finite average number of particles. Indeed, for the observable of number
of particles $N=(0,I,2I,\ldots,nI,\ldots)$ functional (\ref{averageg}) has the form
\begin{eqnarray*}
     &&\langle N\rangle(t)=\sum\limits_{n=0}^{\infty}
         \frac{1}{n!}\,\mathrm{Tr}_{1,\ldots,1+n}\,g_{1+n}(t,1,\ldots,n+1),
\end{eqnarray*}
and, for example, in case of initial data satisfying chaos property (\ref{rozvChaosN}),
if $\| g_1(0)\|_{\mathfrak{L}^{1}(\mathcal{H})}<e^{-1}$, we get
\begin{eqnarray*}
    &&|\langle N\rangle(t)| \leq e\| g_1(0)\|_{\mathfrak{L}^{1}(\mathcal{H})} \sum\limits_{n=0}^{\infty}(n+1)e^n
        (\| g_1(0)\|_{\mathfrak{L}^{1}(\mathcal{H})})^{n}<\infty.
\end{eqnarray*}

\subsection{Marginal density operators}
In view of functional (\ref{averagegs}) we define the marginal density operators $F_s(t)$ 
in terms of correlation operators of clusters of particles 
$g^{(s)}(t)=(g_{1+0}(t,\{Y\}),\ldots,g_{1+n}(t,\{Y\},s+1,\ldots,s+n),\ldots)$ by the expansion
\begin{eqnarray}\label{FClusters}
    &&F_{s}(t,Y)\doteq\sum\limits_{n=0}^{\infty}\frac{1}{n!}\,
       \mathrm{Tr}_{s+1,\ldots,s+n}\,\,g_{1+n}(t,\{Y\},s+1,\ldots,s+n),\quad s\geq1,
\end{eqnarray}
where $g_{1+n}(t,\{Y\},s+1,\ldots,s+n),\,n\geq0$, is  solution (\ref{rozvNF-N_F_clusters}) of the von Neumann hierarchy (\ref{gClusterhierarchy}).
Then the average value functional
of $s$-particle observable (\ref{averagegs}) is given by the formula \cite{BogLect}
\begin{eqnarray*}\label{averageF}
      &&\langle A^{(s)}\rangle(t)=\frac{1}{s!}\,\mathrm{Tr}_{1,\ldots,s}\,a_{s}(1,\ldots,s)F_s(t,1,\ldots,s).
\end{eqnarray*}

The initial-value problem for the marginal ($s$-particle) density operators (\ref{FClusters}) has the following form
(BBGKY hierarchy of the Bose and Fermi many-particle systems) \cite{DP}
\begin{eqnarray} \label{BogolyubovHierarchy}
       &&\frac{d}{dt}F_{s}(t,Y)=-\mathcal{N}_{s}(Y)F_{s}(t,Y)+\nonumber\\
       &&+\sum\limits_{n=1}^{\infty}\frac{1}{n!}
          \mathrm{Tr}_{s+1,\ldots,s+n}
          \sum\limits_{\mbox{\scriptsize $\begin{array}{c}{Z\subset Y},\\Z\neq\emptyset\end{array}$}}
          \big(-\mathcal{N}_{\mathrm{int}}^{(|Z|+n)}\big)(Z,s+1,\ldots,s+n)F_{s+n}(t),\\
    \label{BogolyubovHierarchyInitData}
       &&F_{s}(t)\mid_{t=0}=F_{s}(0),\quad s\geq 1.
\end{eqnarray}

We remark, that as follows from (\ref{averageD}) in the framework of the description of states by the density operators $D=(1,D_{1}(t),\ldots,D_{n}(t),\ldots)$ the marginal density operators $F_s(t),\, s\geq1$, are defined by the well-known formula
(the case of the non-equilibrium grand canonical ensemble \cite{CGP97})
\begin{eqnarray*}\label{F(D)}
   &&F_{s}(t,1,\ldots,s)=(I,D(0))^{-1}
      \sum\limits_{n=0}^{\infty}\frac{1}{n!}\mathrm{Tr}_{s+1,\ldots,s+n}D_{s+n}(t,1,\ldots,s+n),
\end{eqnarray*}
or in terms of definitions (\ref{oper_d}) and (\ref{a}), we rewrite this formula in the following form
\begin{eqnarray*}
   &&F_{|Y|}(t,Y)=(e^\mathfrak{a}D(0))^{-1}_{0}(e^\mathfrak{a}\mathfrak{d}_{Y}D(t))_{0}.
\end{eqnarray*}
Then as above owing to generalized cluster expansions (\ref{gcea})
and the validity of equalities (\ref{d_gamma}) and (\ref{efg}), we deduce expansion (\ref{FClusters})
\begin{eqnarray*}
   &&F_{|Y|}(t,Y)=(e^\mathfrak{a}\,\mathfrak{d}_{\{Y\}}g(t))_{0}.
\end{eqnarray*}
Thus, along with the definition in the framework of the non-equilibrium grand canonical ensemble
the marginal density operators can be defined in the framework of dynamics of correlations
that allows to give the rigorous meaning of the states for more general
classes of operators than trace class operators.

In section 4.4 on basis of the solution of the von Neumann hierarchy (\ref{gClusterhierarchy})
the solution of Cauchy problem (\ref{BogolyubovHierarchy})--(\ref{BogolyubovHierarchyInitData})
of the BBGKY hierarchy for the Bose and Fermi many-particle systems is constructed.

\subsection{Marginal correlation operators}
In view of (\ref{dispg}) we introduce the marginal correlation operators by the series
\begin{eqnarray}\label{Gexpg}
   &&G_{s}(t,1,\ldots,s)\doteq\sum\limits_{n=0}^{\infty}\frac{1}{n!}\,
      \mathrm{Tr}_{s+1,\ldots,s+n}\,\,g_{s+n}(t,1,\ldots,s+n),\quad s\geq1,
\end{eqnarray}
where $g_{s+n}(t,1,\ldots,s+n),\,n\geq0$, is solution (\ref{rozvNF-N_F}) of the von Neumann hierarchy (\ref{vonNeumanHierarchyAlternateDescriptionFB}).
Thus, dispersion functional (\ref{dispg}) represents as follows
\begin{eqnarray*}
   &&\langle(A^{(1)}-\langle A^{(1)}\rangle)^2\rangle(t)=\mathrm{Tr}_1\,\big(a_1^2(1)-\langle A^{(1)}\rangle^2(t)\big)G_1(t,1)+
      \mathrm{Tr}_{1,2}\,a_1(1)a_1(2)\,G_{2}(t,1,2)\nonumber.
\end{eqnarray*}

We note, that the evolution of marginal correlation operators (\ref{Gexpg}) is governed by
the initial-value problem of the nonlinear BBGKY hierarchy \cite{BogLect},\cite{Gre56}.

Marginal correlation operators (\ref{Gexpg}) are interpreted as the operators that describe correlations
of the marginal states (\ref{FClusters}). In fact, it holds
\begin{eqnarray}\label{gBigfromDFB}
    &&G_{s}(t,Y)=\sum\limits_{\mbox{\scriptsize $\begin{array}{c}\mathrm{P}:Y=\bigcup_{i}X_{i}\end{array}$}}
        (-1)^{|\mathrm{P}|-1}(|\mathrm{P}|-1)!\,
        {\mathcal{S}^{\pm}_{n}}\prod_{X_i\subset \mathrm{P}}F_{|X_i|}(t,X_i), \quad s\geq1,
\end{eqnarray}
i.e. the marginal correlation operators are the cumulants (semi-invariants) of the marginal density operators.
In terms of mapping (\ref{circledLn}) sequence (\ref{gBigfromDFB}) can be rewritten in the form
\begin{eqnarray*}\label{gBigfromDFBa}
     && G(t)={\mathbb L}\mathrm{n}_{\circledast}F(t),
\end{eqnarray*}
where $F(t)=\textbf{1}+(0,F_1(t),\ldots, F_n(t),\ldots)$.
As consequence in terms of the marginal density operators
dispersion functional (\ref{dispg}) takes the known form \cite{BogLect}
\begin{eqnarray*}
   &&\langle(A^{(1)}-\langle A^{(1)} \rangle)^2\rangle(t)=\mathrm{Tr}_1\,\big(a_1^2(1)-\langle A^{(1)} \rangle^2(t)\big)F_1(t,1)+\\
   &&+\,\mathrm{Tr}_{1,2}\,a_1(1)a_1(2)\big(F_2(t,1,2)-\mathcal{S}_2^{\pm}F_1(t,1)F_1(t,2)\big)\nonumber.
\end{eqnarray*}

Let us prove the validity of relations (\ref{gBigfromDFB}). Consider expression $(\mathfrak{d}_{Y}\mathbb{E}\mathrm{xp}_{\circledast}G(t))_{0}$.
In view of expansion (\ref{Gexpg}), using definition (\ref{circledExp}) of the mapping ${\mathbb E}\mathrm{xp}_{\circledast}$
and equality (\ref{efg}), we obtain
\begin{eqnarray*}
    &&(\mathfrak{d}_{Y}\mathbb{E}\mathrm{xp}_{\circledast}G(t))_{0}
         =(\mathfrak{d}_{Y}\mathbb{E}\mathrm{xp}_{\circledast}
         e^\mathfrak{a}g(t))_{0}=\\
    &&=\sum\limits_{\mathrm{P}:\,Y=\bigcup_i X_i}
         \mathcal{S}^{\pm}_{|Y|}\,\prod_{X_i\subset\mathrm{P}}(e^\mathfrak{a}\mathfrak{d}_{X_{i}}g(t))_{0}=
         (e^\mathfrak{a}\sum\limits_{\mathrm{P}:\,Y=\bigcup_i X_{i}}\mathfrak{d}_{X_1}g(t)\circledast\ldots
         \circledast \mathfrak{d}_{X_{|\mathrm{P}|}}g(t))_{0}.
\end{eqnarray*}
Then according to equalities (\ref{efg}) and (\ref{oper_dex}), we have
\begin{eqnarray*}
    &&(e^\mathfrak{a}\mathbb{E}\mathrm{xp}_{\circledast}g(t))^{-1}_{0}
         (e^\mathfrak{a}\mathbb{E}\mathrm{xp}_{\circledast} g(t))_{0}
         (e^\mathfrak{a}\sum\limits_{\mathrm{P}:\,Y=\bigcup_i X_{i}}\mathfrak{d}_{X_1}g(t)\circledast\ldots
         \circledast \mathfrak{d}_{X_{|\mathrm{P}|}}g(t))_{0}=\nonumber\\
    &&=(e^\mathfrak{a}\mathbb{E}\mathrm{xp}_{\circledast}g(t))^{-1}_{0}
         (e^\mathfrak{a}(\mathbb{E}\mathrm{xp}_{\circledast} g(t)\circledast
         \sum\limits_{\mathrm{P}:\,Y=\bigcup_i X_{i}}\mathfrak{d}_{X_1}g(t)\circledast
         \ldots\mathfrak{d}_{X_{|\mathrm{P}|}}g(t)))_{0}=\\
    &&=(e^\mathfrak{a}\mathbb{E}\mathrm{xp}_{\circledast}g(t))^{-1}_{0}
         (e^\mathfrak{a}\mathfrak{d}_{Y}\mathbb{E}\mathrm{xp}_{\circledast}g(t))_{0}.
\end{eqnarray*}
Observing that from relations (\ref{gCluster}) it follows that
\begin{eqnarray*}
      &&\mathfrak{d}_{Y}\mathbb{E}\mathrm{xp}_{\circledast}g(t)=\mathfrak{d}_{\{Y\}}\mathbb{E}\mathrm{xp}_{\circledast}g(t),
\end{eqnarray*}
and using equalities (\ref{d_gamma}) and (\ref{efg}), we derive
\begin{eqnarray*}
    &&(e^\mathfrak{a}\mathbb{E}\mathrm{xp}_{\circledast}g(t))^{-1}_{0}
        (e^\mathfrak{a}\mathfrak{d}_{Y}\mathbb{E}\mathrm{xp}_{\circledast}g(t))_{0}=
        (e^\mathfrak{a}\mathbb{E}\mathrm{xp}_{\circledast}g(t))^{-1}_{0}
        (e^\mathfrak{a}\mathfrak{d}_{\{Y\}}\mathbb{E}\mathrm{xp}_{\circledast}g(t))_{0}=\\
    &&=(e^\mathfrak{a}\mathbb{E}\mathrm{xp}_{\circledast}g(t))^{-1}_{0}
        (e^\mathfrak{a}\,\mathbb{E}\mathrm{xp}_{\circledast}\,g(t)\circledast\mathfrak{d}_{\{Y\}}g(t))_{0}
        =(e^\mathfrak{a}\,\mathfrak{d}_{\{Y\}}g(t))_{0}.
\end{eqnarray*}
Therefore, taking into account definition (\ref{FClusters}), we finally establish the validity of the cluster expansions
\begin{eqnarray*}\label{G(g)}
    &&F(t)=\mathbb{E}\mathrm{xp}_{\circledast}G(t).
\end{eqnarray*}
The solutions of obtained recursive relations in component-wise form are expansions (\ref{gBigfromDFB}).

\subsection{On solutions of the BBGKY hierarchies in case of chaos initial data}
Let us apply introduced expansions of marginal correlation operators (\ref{Gexpg}) and
marginal density operators (\ref{FClusters}) to construct solutions of the Cauchy problems of the corresponding BBGKY hierarchies
for one physically motivated example of initial data,
namely initial data satisfying chaos property (\ref{posl_g(0)}).

In terms of marginal correlation operators (\ref{Gexpg}) a chaos property means that
\begin{eqnarray}\label{inG}
   &&G_s(0,1,\ldots,s)=g_1(0,1)\delta_{s,1},\quad s\geq1,
\end{eqnarray}
where $\delta_{s,1}$ is a Kronecker symbol.
Taking into account the structure of solution (\ref{rozvChaosN}) of the initial-value problem of
the von Neumann hierarchy (\ref{vonNeumanHierarchyAlternateDescriptionFB})
in case of initial data (\ref{posl_g(0)}), for expansion (\ref{Gexpg}) we obtain
\begin{eqnarray*}\label{Gg(0)}
   &&G_{s}(t,1,\ldots,s)=\sum\limits_{n=0}^{\infty}\frac{1}{n!}
       \,\mathrm{Tr}_{s+1,\ldots, s+n}\,\mathfrak{A}_{s+n}(t,1,\ldots,s+n)
       \mathcal{S}_{s+n}^{\pm}\,\prod_{i=1}^{s+n}g_{1}(0,i),
\end{eqnarray*}
where $\mathfrak{A}_{s+n}(t)$ is the $(s+n)$-order cumulant (\ref{cumulants}).
In consequence of (\ref{inG}) we finally derive
\begin{eqnarray}\label{GUG(0)}
    &&G_{s}(t,1,\ldots,s)=\sum\limits_{n=0}^{\infty}\frac{1}{n!}
        \,\mathrm{Tr}_{s+1,\ldots, s+n}\,\mathfrak{A}_{s+n}(t,1,\ldots,s+n)
        \mathcal{S}_{s+n}^{\pm}\,\prod_{i=1}^{s+n}G_{1}(0,i),\quad s\geq 1.
\end{eqnarray}
Since the estimate holds
\begin{eqnarray*}
   &&\|\mathfrak{A}_{n}(t)f\|_{\mathfrak{L}^{1}(\mathcal{H}^{\pm}_n)}
        \leq\, n!\,e^n\,\|f\|_{\mathfrak{L}^{1}(\mathcal{H}^{\pm}_n)},
\end{eqnarray*}
series (\ref{GUG(0)}) converges, if $\|G_{1}(0)\|_{\mathfrak{L}^{1}(\mathcal{H})}\leq e^{-1}$.  

Thus, the cumulant structure of solution (\ref{rozvNF-N_F}) of the von Neumann hierarchy
(\ref{vonNeumanHierarchyAlternateDescriptionFB}) induces the cumulant structure of solution expansion (\ref{GUG(0)})
of the initial-value problem of the nonlinear BBGKY hierarchy for marginal correlation operators.

In terms of marginal density operators (\ref{FClusters}) chaos property (\ref{gChaos}) means that
\begin{eqnarray}\label{Fsg1}
    &&F_s(0,Y)=g_1(0,\{Y\}),\quad s\geq 1,
\end{eqnarray}
where $Y=(1,\ldots,s)$ and $\{Y\}$ is a notation introduced in section 2.4.
According to formula (\ref{gtUg0}) for a solution of the initial-value problem of the von Neumann hierarchy (\ref{vonNeumanHierarchyAlternateDescriptionFB}) from definition (\ref{FClusters}) we obtain
\begin{eqnarray*}
  &&F_s(t,Y)=\sum\limits_{n=0}^{\infty}\frac{1}{n!}
       \,\mathrm{Tr}_{s+1,\ldots,s+n}\,\mathfrak{A}_{1+n}(t,\{Y\},X\setminus Y)\,
       \mathcal{S}_{s+n}^{\pm}\, \prod\limits_{i\in (\{Y\},X\setminus Y)}\,g_1(0,i),
\end{eqnarray*}
where $\mathfrak{A}_{1+n}(t)$ is $(n+1)$-order cumulant (\ref{cumcp}).
From definition (\ref{FClusters}) in view of equality (\ref{Fsg1}),
i.e. $g_1(0,i)={\prod\limits}_{j\in \theta(i)}F_1(0,j)$,
we finally derive \cite{DP}
\begin{eqnarray}\label{FUg}
   &&F_{s}(t,Y)=\sum\limits_{n=0}^{\infty}\frac{1}{n!}
       \,\mathrm{Tr}_{s+1,\ldots, s+n}\,\mathfrak{A}_{1+n}(t,\{Y\},X\setminus Y)
       \mathcal{S}_{s+n}^{\pm}\,\prod_{i=1}^{s+n}F_{1}(0,i), \quad s\geq 1.
\end{eqnarray}
We point out that in case of chaos initial data obtained solution expansion (\ref{FUg}) differs from
solution expansion (\ref{GUG(0)}) of marginal correlation operators only by the order
of the cumulants of the groups of operators of the von Neumann equations.

For arbitrary initial data $F(0)\in\mathfrak{L}_{\alpha}^1(\mathcal{F}^{\pm}_{\mathcal{H}})=
\oplus_{n=0}^{\infty}\alpha^n \mathfrak{L}^{1}(\mathcal{H}^{\pm}_n)$,
if $\alpha>e$, for $t\in \mathbb{R}$ there exists a unique solution of the initial value problem (\ref{BogolyubovHierarchy})-(\ref{BogolyubovHierarchyInitData}) given by the series
\begin{eqnarray}\label{FsSolution}
     &&F_s(t,Y)=\sum\limits_{n=0}^{\infty}\frac{1}{n!}\,
          \,\mathrm{Tr}_{s+1,\ldots,s+n}\,\mathfrak{A}_{1+n}(t,\{Y\},X\backslash Y)\,F_{s+n}(0,X),\quad s\geq 1,
\end{eqnarray}
where $\mathfrak{A}_{1+n}(t)$ is $(n+1)$-order cumulant (\ref{cumcp}) of groups of operators (\ref{groupG}).
For initial data $F(0)\in \mathfrak{L}^1_{\alpha,0}(\mathcal{F}^{\pm}_{\mathcal{H}})
\in\mathfrak{L}^1_{\alpha}(\mathcal{F}^{\pm}_{\mathcal{H}})$
it is a strong solution and for arbitrary initial data from the spaces $\mathfrak{L}^1_{\alpha}(\mathcal{F}^{\pm}_{\mathcal{H}})$
it is a weak solution.

It should be emphasized that the solution expansions (\ref{FsSolution}) for the Fermi
and Bose many-particle systems has the same structure, as in case of the Maxwell-Boltzmann statistics \cite{GerS}
which caused by the fact that symmetrization and anti-symmetrization operators (\ref{Sn}) are integrals of motion.

We remark that a solution of initial-value problem (\ref{BogolyubovHierarchy})-(\ref{BogolyubovHierarchyInitData})
is usually represented by the perturbation (iteration) series \cite{AA},\cite{BDGM},\cite{ESY},\cite{Sp80},\cite{M1},\cite{Petr72}.
For certain classes of interaction potentials on suitable subspaces of $\mathfrak{L}_{\alpha}^{1}(\mathcal{F}^{\pm}_\mathcal{H})$
for group (\ref{groupG}) and cumulants (\ref{cumcp}) the analogues of the Duhamel formula hold \cite{BanArl}
and as a result expansion (\ref{FsSolution}) reduces to the iteration series of the BBGKY hierarchy.
For classical systems of particles the first few terms of cumulant solution expansion
for the BBGKY hierarchy were established in \cite{Gre56},\cite{Co68}.

\section{Conclusions}

In the paper it was considered the origin of the microscopic description of
non-equilibrium correlations of Bose and Fermi many-particle systems.
For the correlation operators that give an alternative description of the quantum state evolution of Bose and Fermi many-particle systems,
the von Neumann hierarchy of nonlinear evolution equations (\ref{vonNeumanHierarchyAlternateDescriptionFB})
was introduced.

The properties of group of nonlinear operators generated by the von Neumann hierarchy were examined (Theorem 1).
It gave us the possibility to construct solution (\ref{rozvNF-N_F})
of the Cauchy problem of this hierarchy in the corresponding spaces of sequences of trace class operators (Theorem 2).

In particular, it was established that in case of absence of correlations in the system at initial time
the correlations generated by the dynamics of a system are completely governed by cumulants (\ref{cumulants})
of the groups of operators (\ref{groupG}) of the von Neumann equations.

The links of constructed solution of the von Neumann hierarchy both with the solution of the BBGKY hierarchy
and with the nonlinear BBGKY hierarchy for marginal correlation operators were discussed.
Solution (\ref{FsSolution}) of the BBGKY hierarchy (\ref{BogolyubovHierarchy})
for Fermi and Bose particles with a $n$-body interaction potential was constructed. This solution is represented
in the form of expansion over the clusters of particles, which evolution is governed by the corresponding order
cumulants (\ref{cumcp}) of the groups of operators of the von Neumann equations (\ref{groupG}). The cumulants for the Fermi
and Bose systems of particles have the same structure, as in case of the Maxwell-Boltzmann statistics \cite{GerS}.

The cumulant structure of solution (\ref{rozvNF-N_F}) of the von Neumann hierarchy
(\ref{vonNeumanHierarchyAlternateDescriptionFB}) induces the cumulant structure of solution expansion (\ref{GUG(0)})
both of the initial-value problem of the nonlinear BBGKY hierarchy for marginal correlation operators
and of initial-value problem of the BBGKY hierarchy for marginal density operators. Thus, the dynamics of
infinite-particle systems is generated by the dynamics of correlations.

As we mention above, for initial data from spaces
$\mathfrak{L}^{1}_{\alpha}(\mathcal{F}^{\pm}_\mathcal{H})$  the
average number of particles is finite. In order to describe the evolution of infinitely many particles \cite{CGP97},
in particular, to derive the quantum kinetic equations from constructed dynamics
we have to construct solutions for initial marginal density operators or marginal correlation operators
that belong to more general Banach spaces than $\mathfrak{L}^{1}_{\alpha}(\mathcal{F}^{\pm}_\mathcal{H})$.

\addcontentsline{toc}{section}{References}
\renewcommand{\refname}{References}

\end{document}